\newcommand{\NF}[3]{{\overline{#1}}\langle #2,\overline{#2}\rangle #3}
\newcommand{\nf}{\operatorname{nf}}
\newcommand{\ol}{\operatorname{ol}}
\newcommand{\bN}{\mathbb{N}}
\newcommand{\bM}{\mathbb{M}}
\newcommand{\mixed}{\operatorname{\mu}}
\newcommand{\posproj}{\operatorname{\pi}}
\newcommand{\negproj}{\operatorname{\overline{\pi}}}
\def\vec{\overrightarrow}
\begin{document}

\title{The trace monoids in the queue monoid and in the direct product
  of two free monoids} 

\author{Dietrich Kuske and Olena Prianychnykova%
  \thanks{Supported by the DFG-Project ``Speichermechanismen als
    Monoide'', KU 1107/9-1.}}
\institute{Fachgebiet Automaten und Logik,
  Technische Universit\"at Ilmenau}
\date{}
\maketitle
\begin{abstract}
  We prove that a trace monoid embeds into the queue monoid if and
  only if it embeds into the direct product of two free monoids. We
  also give a decidable characterization of these trace monoids.
\end{abstract}

\section{Introduction}

Trace monoids model the behavior of concurrent systems whose
concurrency is governed by the use of joint resources. They were
introduced into computer science by Mazurkiewicz in his study of Petri
nets~\cite{Maz77}. Since then, much work has been invested on their
structure, see \cite{DieR95} for comprehensive surveys. A basic fact
about trace monoids is that they can be embedded into the direct
product of free monoids \cite{CorP85}. Since the proof of this fact is
constructive, an upper bound for the number of factors needed in such
a free product is immediate (it is the number $\alpha$ of cliques
needed to cover the dependence alphabet). If the dependence alphabet
is a path on $n$ vertices, than this upper bound equals the exact
number, namely $n-1$. But there are cases where the exact number is
considerably smaller (the examples are from~\cite{DieMR95}:
\begin{itemize}
\item If the independence alphabet is the disjoint union of two copies
  of $C_4$ (the cycle on four vertices), then $\alpha=4$, but $3$
  factors suffice.
\item If the independence alphabet is the disjoint union of $n$ copies
  of $K_k$ (the complete graph on $k$ vertices), then $\alpha=k^n$,
  but $k$ factors suffice.
\end{itemize}
The strongest result in this respect is due to Kunc \cite{Kun04}:
Given a $C_3$- and $C_4$-free dependence alphabet and a natural number
$k$, it is decidable whether the trace monoid embeds into the direct
product of $k$ free monoids. In this paper, we extend this positive
result to all dependence alphabets, but only for the case $k=2$. More
precisely, we give a complete and decidable characterization of all
independence alphabets whose generated trace monoid embeds into the
direct product of two free monoids.

Queue monoids, another class of monoids, have been introduced
recently~\cite{HusKZ14,HusKZ16}. They model the behavior of a single
fifo-queue. Intuitively, the basic actions (i.e., generators of the
monoid) are the action of writing the letter $a$ into the queue
(denoted $a$) and reading the letter $a$ from the queue (denoted
$\bar a$). Sequences of actions are equivalent if they induce the same
state change on any queue. For instance, writing a symbol into the
queue and reading \emph{another} symbol from the other end of the
queue are two actions that can be permuted without changing the
overall behavior, symbolically: $a\bar b\equiv \bar b a$. But there
are also more complex equivalences that can be understood as
``conditional commutativity'', e.g., $a b\bar b\equiv a\bar b b$. The
unconditional commutations allow to embed the direct product of two
free monoids into the queue monoid~\cite{HusKZ16}. In \cite{HusKZ16},
it is conjectured that the monoid $\bN^3$ cannot be embedded into the
queue monoid. Note that these two monoids are special trace monoids
and that any trace monoid embedding into the direct product of two
free monoids consequently embeds into the queue monoid. In this paper,
we prove the conjecture from~\cite{HusKZ16} and characterize, more
generally, the class of trace monoids that embed into the queue
monoid.

In summary, this paper characterized two classes of trace monoids
defined by their embedability into $\{a,b\}^*\times\{c,d\}^*$ and into
the queue monoid, respectively. As it turns out, these two classes are
the same, i.e., a trace monoid embeds into the direct product of two
free monoids if and only if it embeds into the queue monoid, and this
property is decidable.

\section{Preliminaries and main result}

\subsection{The trace monoid}

Trace monoids are meant to model the behavior of concurrent systems
whose concurrency is governed by the use of joint resources. Here, we
take a slightly more abstract view and say that two actions are
independent if they use disjoint resources. More formally, an
\emph{independence alphabet} is a pair $(\Gamma,I)$ consisting of a
countable (i.e., finite or of size $\aleph_0$) set $\Gamma$ and an
irreflexive and symmetric relation $I\subseteq\Gamma^2$ called the
\emph{independence relation}. By $D=\Gamma^2\setminus I$, we denote
the complementary \emph{dependence relation}.

An independence alphabet $(\Gamma,I)$ induces a trace monoid as
follows: Let $\equiv_I$ denote the least congruence on the free monoid
$\Gamma^*$ with $ab\equiv_I ba$ for all pairs $(a,b)\in I$. Then the
\emph{trace monoid associated with $(\Gamma,I)$} is the quotient
$\bM(\Gamma,I)=\Gamma^*/\mathord{\equiv_I}$, the equivalence class
containing $u\in\Gamma^*$ is denoted $[u]_I$.

Thus the defining equations of the trace monoid are the equations
$ab\equiv_I ba$ for some pairs of letters $(a,b)$.  

We will only need very basic properties of the trace monoid
$\bM(\Gamma,I)$, namely the following:

\begin{proposition}\label{P:trace_monoid}
  Let $(\Gamma,I)$ be an independence alphabet.
  \begin{enumerate}[(1)]
  \item Let $\Gamma=\bigcup_{i\in I}C_i$ with
    $D=\bigcup_{i\in I}C_i\times C_i$. Then the trace monoid
    $\bM(\Gamma,\Gamma^2\setminus I)$ embeds into the monoid
    \[
       \prod_{i\in I}\{a,b\}^*\,,
    \]
    i.e., into a direct product of free monoids~\cite{CorP85}.
  \item The trace monoid $\bM(\Gamma,I)$ is cancellative, i.e.,
    $uvw\equiv_I uv'w$ implies $v\equiv_I v'$ for all words
    $u,v,v',w\in\Gamma^*$.
  \end{enumerate}
\end{proposition}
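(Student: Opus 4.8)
The plan is to prove the two parts in turn, deriving part~(2) from part~(1). For part~(1), I would exhibit the embedding explicitly by means of projections. For each clique $C_i$ let $\pi_i\colon\Gamma^*\to C_i^*$ be the homomorphism that deletes every letter outside $C_i$. If $(a,b)\in I$, then $a$ and $b$ lie in no common clique (since $C_i\times C_i\subseteq D$ for every $i$ while $(a,b)\notin D$), so $\pi_i(uabv)=\pi_i(ubav)$ for all $i$; hence the family $(\pi_i)_i$ is constant on $\equiv_I$-classes and induces a well-defined homomorphism
\[
  \varphi\colon\bM(\Gamma,I)\to\prod_i C_i^*,\qquad [w]_I\mapsto(\pi_i(w))_i .
\]
Once $\varphi$ is shown to be injective, one only has to replace each factor $C_i^*$ by $\{a,b\}^*$: every countable free monoid $C_i^*$ embeds into $\{a,b\}^*$ via a prefix code (for instance the $n$th generator goes to $a^nb$), and applying such embeddings coordinatewise turns $\varphi$ into an embedding into $\prod_i\{a,b\}^*$.

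The injectivity of $\varphi$ is the heart of the argument, and I would establish it by induction on $|u|$, showing that $\pi_i(u)=\pi_i(v)$ for all $i$ forces $u\equiv_I v$. If $u=\varepsilon$, then every $\pi_i(v)$ is empty, and since $\bigcup_i C_i=\Gamma$ every letter of $v$ survives some projection, so $v=\varepsilon$. If $u=au'$, pick a clique $C_i$ containing $a$; then $a$ is the first letter of $\pi_i(v)=\pi_i(u)$, so no letter of $v$ preceding the first occurrence of $a$ can be dependent on $a$ — such a letter would share a clique with $a$ and would therefore appear before $a$ in that projection. Consequently $v\equiv_I av'$, where $v'$ is $v$ with that first $a$ deleted; cancelling the leading $a$ in each free monoid $C_j^*$ yields $\pi_j(u')=\pi_j(v')$ for all $j$, so $u'\equiv_I v'$ by the induction hypothesis, and hence $u=au'\equiv_I av'\equiv_I v$.

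Part~(2) then costs little extra. Every independence alphabet admits a clique cover of the required kind: take the family of all singletons $\{a\}$ together with all two-element sets $\{a,b\}$ with $(a,b)\in D$; this family is countable because $\Gamma$ is, its union is $\Gamma$, and the union of the associated products $C_i\times C_i$ is exactly $D$ (using that $D$ is reflexive and symmetric). By part~(1), $\bM(\Gamma,I)$ embeds into a direct product of free monoids. Free monoids are cancellative, cancellativity passes to arbitrary direct products (it can be checked coordinatewise) and to submonoids, and two-sided cancellativity $uvw\equiv_I uv'w\Rightarrow v\equiv_I v'$ is just left- plus right-cancellativity; hence $\bM(\Gamma,I)$ is cancellative.

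I expect the only genuine obstacle to be the inductive step above, specifically the claim that the first letter $a$ of $u$ may be commuted to the front of $v$ modulo $\equiv_I$: this is precisely where the hypothesis $D=\bigcup_i C_i\times C_i$ enters, through the observation that any obstruction — a letter dependent on $a$ occurring before the first $a$ in $v$ — would be exposed by the projection onto a clique containing both letters. Everything else (the coordinatewise prefix-code embeddings, the stability of cancellativity under products and submonoids) is routine bookkeeping.
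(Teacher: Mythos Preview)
Your argument is correct and is the classical Cori--Perrin projection proof. The paper itself does not prove this proposition: part~(1) is simply attributed to~\cite{CorP85}, and part~(2) is stated as a basic fact without justification, so there is no ``paper's own proof'' to compare against. Your derivation of~(2) from~(1) via the clique cover of $(\Gamma,D)$ by singletons and dependent pairs, followed by the observation that cancellativity is inherited by submonoids of direct products of free monoids, is clean and entirely standard.
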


In this paper, we will often use graph-theoretic terms to speak about
an independence alphabet $(\Gamma,I)$ -- where we identify $I$ with
the set of edges $\{a,b\}$ for $(a,b)\in I$. In other words, we think
of $(\Gamma,I)$ as a symmetric and loop-free graph. We will also take
the liberty to write $(C,I)$ for the subgraph of $(\Gamma,I)$ induced
by $C\subseteq\Gamma$. We call a connected component $C$ of
$(\Gamma,I)$ \emph{nontrivial} if it is not an isolated vertex. The
connected component $C$ is \emph{bipartite} if
$I\cap C^2\subseteq (C_1\times C_2)\cup (C_2\times C_1)$ for some
partition $C_1\uplus C_2$ of $C$. It is \emph{complete bipartite} if
$I\cap C^2=(C_1\times C_2)\cup (C_2\times C_1)$. Finally, an
independence alphabet $(\Gamma,I)$ is \emph{$P_4$-free} if no induced
subgraph is isomorphic to $P_4$, i.e., if there are no four distinct
vertices $a,b,c,d$ with $(a,b),(b,c),(c,d)\in I$ and
$(b,d),(d,a),(a,c)\in D$.

Using this graph theoretic language, the sets $C_i$ in
Proposition~\ref{P:trace_monoid}(1) form a covering of $(\Gamma,D)$ by
cliques. It follows that the trace monoid $\bM(\Gamma,I)$ can be
embedded into the direct product of two free monoids whenever
$(\Gamma,D)$ has a clique covering with two cliques. But the existence
of a clique cover with two cliques is not necessary for such an
embedding. As an example, consider the independence alphabet
$(\Gamma,I)$ with $\Gamma=\{a_i,b_i\mid 0\le i<n\}$ and
$I=\{(a_i,b_i),(b_i,a_i)\mid 0\le i< n\}$ (where
$n\in\bN\cup\{\aleph_0\}$). Then
$D=\{(a_i,b_j),(b_j,a_i)\mid 0\le i,j< n, i\neq j\}$. Hence
$|D|+n=n^2$ and all cliques in $(\Gamma,D)$ contain at most $2$
elements. Our main result shows that, nevertheless, the trace monoid
$\bM(\Gamma,I)$ embeds into the direct product of two copies of
$\{a,b\}^*$.

\subsection{The queue monoid}
The queue monoid models the behavior of a fifo-queue whose entries
come from a set~$A$. Consequently, the state of a valid queue is an
element from~$A^*$. In order to have a defined result even if a read
action fails, we add the error state~$\bot$. The basic actions are
writing of the symbol $a\in A$ into the queue (denoted~$a$) and
reading the symbol $a\in A$ from the queue
(denoted~$\overline{a}$). Formally, $\overline{A}$ is a disjoint copy
of $A$ whose elements are denoted $\overline{a}$. Furthermore, we set
$\Sigma=A\cup\overline{A}$.  Hence, the free monoid $\Sigma^*$ is the
set of sequences of basic actions and it acts on the set
$A^*\cup \{\bot\}$ by way of the function
$.\colon (A^*\cup\{\bot\})\times\Sigma^* \to A^*\cup\{\bot\}$, which
is defined as follows:
\begin{align*}
  q.\varepsilon & = q &
  q.au&= qa.u & q.\overline{a}u &=
  \begin{cases}
    q'.u & \text{ if }q=aq'\\
    \bot & \text{ otherwise}
  \end{cases}
  & \bot.u &=\bot
\end{align*}
for $q\in A^*$, $a\in A$, and $u\in\Sigma^*$.

\begin{definition}
  Two words $u,v\in\Sigma^*$ are \emph{equivalent} if $q.u=q.v$ for
  all queues $q\in A^*$. In that case, we write $u\equiv v$. The
  equivalence class wrt.\ $\equiv$ containing the word $u$ is
  denoted~$[u]$.

  Since $\equiv$ is a congruence on the free monoid $\Sigma^*$, we can
  define the quotient monoid $Q_A=\Sigma^*/\mathord{\equiv}$ that is
  called the \emph{monoid of queue actions} or \emph{queue monoid} for
  short.
\end{definition}

Note that two queue monoids are not isomorphic if the generating sets
have different size. But, for any generating set $A$, the queue monoid
$Q_A$ embeds into $Q_{\{a,b\}}$ \cite[Cor.~5.5]{HusKZ16} (the proof in
\cite{HusKZ16} can easily be extended to infinite sets $A$). Since
this paper is concerned with submonoids of $Q_A$, the concrete size of
$A$ does not matter. Hence we will simply write $Q$ for $Q_A$, no
matter what the set $A$ is.

\begin{theorem}[\protect{\cite[Theorem 4.3]{HusKZ16}}]\label{T: equations}
  The equivalence relation $\equiv$ is the least congruence on the
  free monoid $\Sigma^*$ satisfying the following for all $a,b,c\in A$:
  \begin{align*}
    a\overline{b} &\equiv \overline{b}a \text{ if }a\neq b\\
    a\overline{b}\overline{c} & \equiv \overline{b} a\overline{c}\\
    ab\overline{c} & \equiv a\overline{c}b
  \end{align*}
\end{theorem}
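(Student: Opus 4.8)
The plan is to show that $\equiv$ is exactly the least congruence $\sim$ on $\Sigma^*$ generated by the three displayed equation schemes. One inclusion is easy: each equation is \emph{sound}, i.e.\ its two sides induce the same transformation of $A^*\cup\{\bot\}$, which follows from a short case distinction on the queue (whether it is empty, and if not whether its first letter equals the letter read next); for example, both sides of $a\overline b\overline c\equiv\overline ba\overline c$ send the empty queue to $\bot$, send a queue $bq'$ to the queue obtained by reading $c$ from $q'a$, and send any other nonempty queue to $\bot$. Since $\equiv$ is a congruence containing all these pairs while $\sim$ is the least such congruence, $\sim\ \subseteq\ \equiv$. (I assume $|A|\ge2$ here. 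This is needed: for $|A|=1$ one has $a\overline a\equiv\varepsilon$, which does not follow from the length-preserving equations above---indeed $Q_{\{a\}}$ is the bicyclic monoid.)

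The reverse inclusion I would obtain from a normal form theorem for $\sim$. Orient the three equations from left to right as rewrite rules; each rule moves a read action one position to the left, and a direct computation shows that every application strictly decreases the number of pairs $(i,j)$ with $i<j$, $s_i\in A$ and $s_j\in\overline A$ (only the positions touched by a rule change their contribution, and the net change is $-1$). Hence the rewriting terminates, so every word is $\sim$-equivalent to an irreducible one. The structural heart of the argument is to show that the irreducible words are exactly the words $\NF uvw$, i.e.\ $\overline u\, v_1\overline{v_1}\, v_2\overline{v_2}\cdots v_n\overline{v_n}\, w$ with $u,w\in A^*$ and $v=v_1v_2\cdots v_n\in A^*$, and that $(u,v,w)\mapsto\NF uvw$ is injective: irreducibility says precisely that there is no factor $a\overline b$ with $a\neq b$, no factor $a\overline b\overline c$, and no factor $ab\overline c$, and after splitting off the maximal prefix of read actions, an induction on the word shows this forces the shape $\NF uvw$.

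Finally one must check that distinct normal forms induce distinct transformations; combined with soundness this gives that each $\sim$-class contains a unique normal form $\nf(s)$ and hence the inclusion $\equiv\ \subseteq\ \sim$ (from $s\equiv s'$ we get $\nf(s)\equiv s\equiv s'\equiv\nf(s')$, so $\nf(s)=\nf(s')$, so $s\sim\nf(s)=\nf(s')\sim s'$). For the injectivity I would read off $u$, $w$ and $v$ from the action of $N=\NF uvw$ in turn: $|u|=\min\{|q|:q.N\neq\bot\}$ and the only queue of that length not sent to $\bot$ is $u$, so $u$ is determined; then $w=u.N$; and for the mixed part one uses that $uv.N=vw$ while $uvd.N=dvw\neq\bot$ for every letter $d$. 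Given a second normal form $N'=\NF u{v'}w$ with the same action, if $|v'|>|v|$ then $v'$ is forced to be a prefix of $v^\omega$ (by $uv.N'=uv.N\neq\bot$ when $v\neq\varepsilon$, and by a one-letter probe when $v=\varepsilon$), so choosing $d\neq v_1$---possible since $|A|\ge2$---gives $uvd.N'=\bot\neq dvw=uvd.N$, a contradiction; by symmetry $|v|=|v'|$, and then $v=v'$ because both are the length-$|v|$ prefix of $v^\omega$. The step I expect to be most laborious is the normal form theorem itself, specifically the case analysis pinning down the irreducible words as $\NF uvw$; termination and the concluding semantic computations are comparatively short.
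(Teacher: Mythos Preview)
The paper does not prove this theorem; it merely cites it as \cite[Theorem~4.3]{HusKZ16}, and a few lines later it also imports---again without proof---the companion facts that the left-to-right orientation is confluent and terminating with irreducible words exactly the $\NF uvw$. So there is no in-paper proof to compare against.

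That said, your outline is correct and is essentially the natural proof. The termination argument via the inversion count $\#\{(i,j):i<j,\ s_i\in A,\ s_j\in\overline A\}$ is sound (for each rule the multiset of symbols at the affected positions is unchanged, so only the internal contribution drops, by exactly~$1$), and the structural analysis pinning down irreducibles as $\NF uvw$ is right: once two write symbols occur consecutively no read can ever follow, and any isolated write must be matched by an identical read immediately after it. A pleasant feature of your argument is that you never verify local confluence: existence of \emph{some} $\sim$-normal form, together with soundness ($\sim\subseteq\equiv$) and the semantic separation of distinct normal forms, already yields $\equiv\subseteq\sim$. This is slightly slicker than what the paper quotes, which asserts confluence outright.

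Two small remarks. Your caveat $|A|\ge2$ is correct and necessary; for $|A|=1$ one has $a\overline a\equiv\varepsilon$, which the length-preserving rules cannot produce. And in the final distinguishing step, the clause ``choosing $d\neq v_1$'' is meaningless when $v=\varepsilon$; there the one-letter probe already finishes the job, since $ud.N'\neq\bot$ would force $d=v'_1$ for \emph{every} $d\in A$, impossible once $|A|\ge2$ and $v'\neq\varepsilon$.
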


The second and third of these equations generalize nicely to words:
\begin{lemma}[\protect{\cite[Corollary
    3.6]{HusKZ16}}]\label{L:generalized_equations}
  Let $u,v,w\in A^*$.
  \begin{itemize}
  \item If $|u|\le|w|$, then $u \overline{v} \overline{w} \equiv
    \overline{v} u \overline{w}$.
  \item If $|u|\ge|w|$, then $u v \overline{w} \equiv
    u \overline{w} v$.
  \end{itemize}
\end{lemma}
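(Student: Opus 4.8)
The plan is to prove both claims semantically, directly from the definition of $\equiv$: it suffices to verify that, for every queue $q\in A^*$, the two words in question send $q$ to the same element of $A^*\cup\{\bot\}$. (In particular Theorem~\ref{T: equations} is not needed; one could instead run nested inductions over its equations, but that is considerably messier.)

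First I would record the building blocks of the action: for $q,p,r\in A^*$ one has $q.r=qr$, and $q.\overline p$ equals $p^{-1}q$ (the unique word with $q=p\,(p^{-1}q)$) if $p$ is a prefix of $q$ and equals $\bot$ otherwise; moreover reading $\overline p$ then $\overline{p'}$ is the same as reading $\overline{pp'}$, and $\bot$ is absorbing. Composing these, the four words occurring in the lemma act on a queue $q\in A^*$ as follows:
\begin{itemize}
\item $q.(u\overline v\overline w)$ equals $(vw)^{-1}(qu)$ if $vw$ is a prefix of $qu$, and $\bot$ otherwise;
\item $q.(\overline v u\overline w)$ equals $w^{-1}\!\big((v^{-1}q)u\big)$ if $v$ is a prefix of $q$ and $w$ is a prefix of $(v^{-1}q)u$, and $\bot$ otherwise;
\item $q.(uv\overline w)$ equals $w^{-1}(quv)$ if $w$ is a prefix of $quv$, and $\bot$ otherwise;
\item $q.(u\overline w v)$ equals $\big(w^{-1}(qu)\big)v$ if $w$ is a prefix of $qu$, and $\bot$ otherwise.
\end{itemize}

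For the first item I would fix $q$ and split on whether $v$ is a prefix of $q$. If it is, say $q=vq_1$, then the conditions ``$vw$ prefix of $qu$'' and ``$w$ prefix of $(v^{-1}q)u=q_1u$'' coincide, and so do the two results (both equal $w^{-1}(q_1u)$); this case does not use $|u|\le|w|$. If $v$ is not a prefix of $q$, then $q.(\overline v u\overline w)=\bot$, and it remains to show $q.(u\overline v\overline w)=\bot$, i.e.\ that $vw$ is not a prefix of $qu$. This is the single point where $|u|\le|w|$ is used: if $vw$ were a prefix of $qu$ then $|v|+|w|\le|q|+|u|\le|q|+|w|$, hence $|v|\le|q|$, and then the length-$|v|$ prefix of $qu$ lies inside $q$ and equals $v$, so $v$ is a prefix of $q$ -- a contradiction. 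The second item I would treat symmetrically, splitting on whether $w$ is a prefix of $qu$ and invoking $|u|\ge|w|$ (so $|w|\le|u|\le|qu|$) in the analogous step.

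The only real obstacle is exactly this ``otherwise'' case. A priori $u\overline v\overline w$ could avoid $\bot$ on $q$ even when $\overline v$ alone already kills $q$, because the writes $u$ are appended before any reading happens and, when $q$ is short, might supply the very symbols $\overline v$ needs from the front of the queue. Ruling this out is precisely what the length hypothesis buys, and the argument above reduces it to a one-line comparison of prefix lengths; everything else is routine bookkeeping with prefixes.
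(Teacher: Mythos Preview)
Your argument is correct. The computations of the four actions on $q$ are right, and the case splits close cleanly; in particular, the ``otherwise'' cases are exactly where the length hypotheses do their work, and your one-line prefix-length comparison handles them.

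As for comparison: the paper does not prove this lemma at all --- it simply quotes it as \cite[Corollary~3.6]{HusKZ16}. So there is no in-paper proof to compare against. What can be said is that the cited result is labelled a \emph{corollary}, which suggests that in \cite{HusKZ16} it is derived syntactically from the defining equations of Theorem~\ref{T: equations} (or their precursors) by induction on the lengths of $u$, $v$, $w$. Your approach is the semantic alternative you yourself flag: verify $q.u'=q.v'$ directly for every queue $q$, bypassing the equational presentation entirely. This is shorter and more transparent here, since the action on queues is concrete and the only subtlety --- that writing $u$ before reading $\overline v$ cannot rescue a read that would otherwise fail --- reduces to a length inequality. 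The equational route would buy you nothing extra in this context.
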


Let $\posproj\colon \Sigma^*\to A^*$ be the homomorphism defined by
$\posproj(a)=a$ and $\posproj(\overline{a})=\varepsilon$ for all $a\in
A$. Similarly, define the homomorphism $\negproj\colon
\Sigma^*\to A^*$ by $\posproj(a)=\varepsilon$ and $\posproj(\overline{a})=a$ for
all $a\in A$. Then, from Theorem~\ref{T: equations}, we
immediately get
\[
   u\equiv v\ \Longrightarrow\
    \posproj(u)=\posproj(v)\text{ and }\negproj(u)=\negproj(v)
\]
for all words $u,v\in\Sigma^*$. Hence the homomorphisms $\posproj$ and
$\negproj$ define homomorphisms from $Q$ to $A^*$ by
$[u]\mapsto\posproj(u)$ and $[u]\mapsto\negproj(u)$. The words
$\posproj(u)$ and $\negproj(u)$ are called the \emph{positive} and
\emph{negative projection} of $u$ (or $[u]$).

Ordering the equations from Theorem~\ref{T: equations} from left to
right, we obtain a semi-Thue system. This semi-Thue system is
confluent and terminating. Hence any equivalence class of $\equiv$ has
a unique normal form. To describe these normal forms, we write
$\left<a_1a_1\dots
  a_n,\overline{b_1}\overline{b_2}\dots\overline{b_n}\right>$ for
$a_1\overline{b_1} a_2\overline{b_2}\dots a_n\overline{b_n}$ (where
$n\in\bN$ and $a_i,b_i\in A$ for all $1\le i\le n$). Then a word
$u\in\Sigma^*$ is in normal form iff there are three words
$u_1,u_2,u_3\in A^*$ with $u= \NF{u_1}{u_2}{u_3}$. We write $\nf(u)$
for the unique word from the equivalence class $[u]$ in normal form.
Furthermore, the mixed or central part of the word $\nf(u)$, i.e., the
word $u_2$ with $\nf(u)=\NF{u_1}{u_2}{u_3}$ is denoted $\mixed(u)$.

The importance of this word $\mixed(u)$ is described by the following
observation: Let $u,v\in \Sigma^*$. Then the following are equivalent:
\begin{enumerate}
\item $u\equiv v$
\item $\nf(u)=\nf(v)$
\item $\posproj(u)=\posproj(v)$, $\negproj(u)=\negproj(v)$, and
  $\mixed(u)=\mixed(v)$
\end{enumerate}

Next, we describe the normal form of the product of two words in
normal form. For this, we need the concept of the overlap of two
words: Let $u,v\in A^*$. Then the \emph{overlap} of $u$ and $v$ is the
longest word $x$ that is both, a suffix of $u$ and a prefix of $v$. We
write $\ol(u,v)$ for this overlap.
\begin{theorem}[\protect{\cite[Theorem
    5.5]{HusKZ16}}]\label{T:multiplication}
  Let $u,v\in A^*$. Then $\nf(uv)= \NF{s}{\mixed(uv)}{t}$ with
  \begin{align*}
    \mixed(uv)&=\ol(\mixed(u)\negproj(v),\posproj(u)\mixed(v))\,,\\
    s\mixed(uv) &=\negproj(uv)\text{ and }\\
    \mixed(uv)\,t &=\posproj(uv)\,.
    \end{align*}
\end{theorem}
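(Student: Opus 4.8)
The plan is to prove Theorem~\ref{T:multiplication} by reducing the product of two normal forms to a normal form via the rewriting system and the generalized equations of Lemma~\ref{L:generalized_equations}, and then identifying the three components by means of the projections and the overlap. First I would write $\nf(u)=\NF{p}{\mixed(u)}{q}$ and $\nf(v)=\NF{r}{\mixed(v)}{s'}$, so that $\posproj(u)=p\,\mixed(u)$, $\negproj(u)=\mixed(u)\,q$, and symmetrically for $v$. The concatenation $uv$ then has the shape $\overline{p}\langle\mixed(u),\overline{\mixed(u)}\rangle q\,\overline{r}\langle\mixed(v),\overline{\mixed(v)}\rangle s'$, and the goal is to push all the ``lonely'' write-letters of $q$ to the right past the read-block $\overline{\mixed(u)}\,\overline{r}$ on their left, and to pull the read-letters $\overline{r}$ leftwards, using the two generalized equations. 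The key computation is that the block $q\,\overline{\mixed(u)}\,\overline{r}$ is equivalent (by the first item of Lemma~\ref{L:generalized_equations}, possibly after splitting $q$ at the appropriate length) to a reshuffled word whose central interleaved part is exactly the overlap $\ol(\mixed(u)\negproj(v),\posproj(u)\mixed(v))$.

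Concretely, the second step is to observe that the product is already equivalent to a word of the form $\overline{x_1}\langle y,\overline{y}\rangle x_3$ for suitable $x_1,y,x_3\in A^*$; since $\equiv$ preserves the three invariants (positive projection, negative projection, and mixed part), we must have $x_1 y = \negproj(uv)=\negproj(u)\negproj(v)$ and $y x_3 = \posproj(uv)=\posproj(u)\posproj(v)$, so $y=\mixed(uv)$ is determined once we know its length. Thus the whole theorem reduces to the single numerical claim that $|\mixed(uv)| = |\ol(\mixed(u)\negproj(v),\posproj(u)\mixed(v))|$, together with the consistency check that this overlap is simultaneously a suffix of $\negproj(uv)$ and a prefix of $\posproj(uv)$ — which is automatic since $\mixed(u)\negproj(v)$ is a suffix of $\negproj(u)\negproj(v)$ and $\posproj(u)\mixed(v)$ is a prefix of $\posproj(u)\posproj(v)$.

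For the length claim I would argue by a careful application of Lemma~\ref{L:generalized_equations}. Writing $m=|\mixed(u)|$ and $\ell=|q|=|\negproj(u)|-m$, and $k=|r|=|\negproj(v)|-|\mixed(v)|$, the letters of $q$ that can be commuted to the right past $\overline{\mixed(u)}\,\overline{r}$ number $\min(\ell,m+k)$ via the first bullet of the lemma (``$|u|\le|w|$''), while the remaining write-letters of $q$ stay and get interleaved; dually the read-letters that stay interleaved on the $v$ side are governed by the second bullet. Matching these counts against the definition of $\ol$ gives the stated length. The main obstacle will be bookkeeping: one has to handle the case distinction in which $q$ is longer or shorter than $\overline{\mixed(u)}\,\overline{r}$ (equivalently, whether $\mixed(u)\negproj(v)$ is longer or shorter than $\posproj(u)\mixed(v)$), and in each case verify that the rewriting terminates in a word of normal-form shape whose central part has precisely the overlap length. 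I would either do this by a direct induction on $|v|$ (reducing $v$ by one generator and invoking the already-established associativity of $\equiv$), or by an explicit two-case computation; the induction route is cleaner because it lets Lemma~\ref{L:generalized_equations} do the work one letter at a time and avoids a global reshuffling argument.

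Finally, once $\mixed(uv)$ is pinned down, the words $s$ and $t$ are defined by the equations $s\,\mixed(uv)=\negproj(uv)$ and $\mixed(uv)\,t=\posproj(uv)$; these are well-defined because $\mixed(uv)$ is a suffix of $\negproj(uv)$ and a prefix of $\posproj(uv)$, as noted above, so $\nf(uv)=\NF{s}{\mixed(uv)}{t}$ is genuinely in normal form and represents $[uv]$ by the three-invariant criterion. I expect no difficulty in this last step once the length identity is in hand.
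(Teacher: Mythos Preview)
The paper does not prove Theorem~\ref{T:multiplication}; it is quoted verbatim from \cite[Theorem~5.5]{HusKZ16} and used as a black box, so there is no proof in the present paper to compare your proposal against.

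As an independent remark on your outline: the strategy of concatenating the two normal forms and rewriting via Lemma~\ref{L:generalized_equations} is the natural one and is essentially how the result is established in \cite{HusKZ16}. Do watch your bookkeeping, however: with $\nf(u)=\NF{p}{\mixed(u)}{q}$ one has $\posproj(u)=\mixed(u)\,q$ and $\negproj(u)=p\,\mixed(u)$, not the other way around as you wrote (the first slot of $\NF{\cdot}{\cdot}{\cdot}$ is the barred prefix). Also, in the concatenation the block to be reshuffled is $\langle\mixed(u),\overline{\mixed(u)}\rangle\, q\,\overline{r}\,\langle\mixed(v),\overline{\mixed(v)}\rangle$; the segment $\overline{\mixed(u)}$ is not a contiguous block to the right of $q$ but is interleaved with $\mixed(u)$ to its left, so your description of ``the block $q\,\overline{\mixed(u)}\,\overline{r}$'' is not quite the right picture. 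These are cosmetic issues; once corrected, the two-case analysis (according to whether $|q|\le|r|$ or $|q|>|r|$, i.e., which of $\mixed(u)\negproj(v)$ and $\posproj(u)\mixed(v)$ is shorter) together with Lemma~\ref{L:generalized_equations} does yield the claimed overlap formula.
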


In the following lemma we describe the normal form of the $n$-th power
of an element of the queue monoid~$Q$. This will turn out useful in
the following considerations.

\begin{lemma}\label{L:power}
  Let $u\in A^*$. Then for every $n\geq 1$ we have 
  \[
    \mixed(u^n)=
  \ol(\mixed(u)\negproj(u)^{n-1},\posproj(u)^{n-1}\mixed(u))\,.
  \]
\end{lemma}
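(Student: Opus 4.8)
The plan is to prove the formula by induction on $n$, using Theorem~\ref{T:multiplication} for the inductive step with $u^n=u^{n-1}\cdot u$. The case $n=1$ is immediate, since the empty overlap convention gives $\ol(\mixed(u),\mixed(u))=\mixed(u)$. For the inductive step, assume the claim for some $n\ge 1$. Applying Theorem~\ref{T:multiplication} to the product $u^{n-1}\cdot u$ (or more conveniently $u^n\cdot u$, using that $\posproj$ and $\negproj$ are homomorphisms so $\posproj(u^n)=\posproj(u)^n$ and $\negproj(u^n)=\negproj(u)^n$) yields
\[
  \mixed(u^{n+1})=\ol\bigl(\mixed(u^n)\,\negproj(u),\ \posproj(u)\,\mixed(u^n)\bigr)\,.
\]
Substituting the induction hypothesis $\mixed(u^n)=\ol(\mixed(u)\negproj(u)^{n-1},\posproj(u)^{n-1}\mixed(u))$ into both arguments, I then need to show this equals $\ol(\mixed(u)\negproj(u)^{n},\posproj(u)^{n}\mixed(u))$.

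The crux is therefore a purely combinatorial statement about overlaps of words: if $x$ is a suffix of $p$ and a prefix of $q$ (so $x=\ol(p,q)$ in the relevant instances), how does $\ol(x\cdot r, s\cdot x)$ relate to $\ol(p\cdot r, s\cdot q)$? Concretely, writing $x=\mixed(u^n)$, $p=\mixed(u)\negproj(u)^{n-1}$, $r=\negproj(u)$, $s=\posproj(u)$, $q=\posproj(u)^{n-1}\mixed(u)$, I want $\ol(xr,sx)=\ol(pr,sq)$. One direction is easy: since $x$ is a suffix of $p$, $xr$ is a suffix of $pr$; since $x$ is a prefix of $q$, $sx$ is a prefix of $sq$; hence any common suffix/prefix witness for $(xr,sx)$ is also one for $(pr,sq)$, giving $\ol(xr,sx)\preceq$ (is a suffix-and-prefix of) the relevant words but I actually need the reverse containment too. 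For the reverse I expect to use the defining property of $x$ as a \emph{maximal} overlap: a long common piece $z$ of $pr$ and $sq$, once it exceeds length $|r|=|\negproj(u)|$ in its ``$r$-part'', forces a long overlap of $p$ and $q$ and hence $z$ cannot be longer than what $x$ already accounts for. This maximality argument, carefully handling the boundary between the $x$-part and the appended $r$ (resp.\ $s$) part, is the main obstacle.

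Alternatively, and perhaps more cleanly, I would try to avoid the two-step reduction and instead directly characterize $\mixed(u^n)$ via Lemma~\ref{L:generalized_equations}: the word $u^n$, after applying the rewriting rules, has a normal form whose central part is exactly the longest word that is simultaneously a suffix of $\negproj(u^n)=\negproj(u)^n$ up to the ``already-matched'' prefix $\mixed(u)$ and a prefix of $\posproj(u^n)=\posproj(u)^n$. Since $\negproj(u)^n=\negproj(u)^{n-1}\cdot\negproj(u)$ and the leading occurrence of $\mixed(u)$ sits at the front, one sees directly that every common factor is a suffix of $\mixed(u)\negproj(u)^{n-1}$ and a prefix of $\posproj(u)^{n-1}\mixed(u)$, and conversely the longest such word is realized. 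Either route works; I would present the induction since it reuses Theorem~\ref{T:multiplication} as a black box and isolates the combinatorial overlap lemma, which I would state and prove separately if it gets technical.
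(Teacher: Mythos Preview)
Your overall plan---induction on $n$ using Theorem~\ref{T:multiplication}---is exactly the paper's approach, but you have misapplied the theorem in the displayed formula. For the factorisation $u^{n+1}=u^n\cdot u$, Theorem~\ref{T:multiplication} gives
\[
  \mixed(u^{n+1})=\ol\bigl(\mixed(u^n)\,\negproj(u),\ \posproj(u)^{\,n}\,\mixed(u)\bigr),
\]
not $\ol\bigl(\mixed(u^n)\,\negproj(u),\ \posproj(u)\,\mixed(u^n)\bigr)$: the second argument is $\posproj(u^n)\mixed(u)=\posproj(u)^n\mixed(u)$, with $\mixed(u)$, not $\mixed(u^n)$. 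This matters, because your abstracted combinatorial lemma ``$\ol(xr,sx)=\ol(pr,sq)$ whenever $x=\ol(p,q)$'' is \emph{false} in general: take $p=ab$, $q=ba$ (so $x=b$) and $r=s=a$; then $\ol(xr,sx)=\ol(ba,ab)=a$ while $\ol(pr,sq)=\ol(aba,aba)=aba$.

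Once you use the correct formula, the picture simplifies considerably and becomes the paper's argument verbatim: the second argument $\posproj(u)^{n}\mixed(u)$ already coincides with the second argument of the target overlap, so you only need
\[
  \ol\bigl(\mixed(u^n)\,\negproj(u),\ \posproj(u)^{n}\mixed(u)\bigr)
  =\ol\bigl(\mixed(u)\,\negproj(u)^{n},\ \posproj(u)^{n}\mixed(u)\bigr).
\]
The ``easy direction'' you describe is exactly the paper's: $\mixed(u^n)$ is a suffix of $\mixed(u)\negproj(u)^{n-1}$ by the induction hypothesis, hence the left-hand overlap is at most the right-hand one. For the reverse inequality the paper argues by length: one shows that the right-hand overlap has length at most $|\mixed(u^n)\negproj(u)|$, whence (as both it and $\mixed(u^n)\negproj(u)$ are suffixes of $\mixed(u)\negproj(u)^{n}$) it is already a suffix of $\mixed(u^n)\negproj(u)$ and therefore bounded by the left-hand overlap. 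The maximality of $\mixed(u^n)=\ol(\mixed(u)\negproj(u)^{n-1},\posproj(u)^{n-1}\mixed(u))$ from the induction hypothesis is precisely what drives this length bound, as you suspected; your sketch is on the right track but you should carry it out explicitly rather than leave it as ``the main obstacle''. Your alternative route via Lemma~\ref{L:generalized_equations} is too vague to assess.
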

\begin{proof}
  We prove the lemma by induction on $n$. The statement is obvious for
  $n =1$.

  Let $n>1$ and assume that the statement holds for every $i <
  n$. Then by the induction hypothesis 
  \[
    \mixed (u^{n-1})=
    \ol(\mixed(u)\negproj(u)^{n-2},\posproj(u)^{n-2}\mixed(u))\,.
  \]
  Now set
  \[ 
    s=\ol (\mixed(u^{n-1})\negproj(u), \posproj(u^{n-1})\mixed(u))
  \]
  such that $\mixed(u^n)=\mixed(u^{n-1}u)=s$ by
  Theorem~\ref{T:multiplication}. It remains to be shown that $s$ is
  the overlap of the words $\mixed(u)\negproj(u)^{n-1}$ and
  $\posproj(u)^{n-1}\mixed(u)$. To simplify notation, let $s'$ denote
  this overlap, i.e., set
  \[
    s'=\ol(\mixed(u)\negproj(u)^{n-1},\posproj(u)^{n-1}\mixed(u))\,.
  \]
  Note that $s$ is a suffix of $\mixed(u^{n-1})\negproj(u)$. Since
  $\mixed(u^{n-1})$ is a suffix of $\mixed(u)\negproj(u)^{n-2}$, it
  follows that $s$ is a suffix of $\mixed(u)\negproj(u)^{n-1}$. By its
  very definition, $s$ is also a prefix of
  $\posproj(u^{n-1})\mixed(u)$. Since $s'$ is the longest word that is
  both, a suffix of $\mixed(u)\negproj(u)^{n-1}$ and a prefix of
  $\posproj(u)^{n-1}\mixed(u)$, it follows that $|s'|\ge|s|$. Since
  $s=\ol (\mixed(u^{n-1})\negproj(u), \posproj(u^{n-1})\mixed(u))$, we
  get $|s|-|\mixed(u^{n-1})|\le |\negproj(u)|$, i.e.,
  $|s|\le|\mixed(u^{n-1}) \negproj(u)|$. Since both, $s'$ and
  $\mixed(u^{n-1})\negproj(u)$ are suffixes of
  $\mixed(u)(\negproj(u))^{n-1}$, it follows that $s'$ is a suffix of
  $\mixed(u^{n-1}) \negproj(u)$. Since it is also a prefix of
  $\posproj(u)^{n-1}\mixed(u)$, we get $|s|\ge|s'|$. Hence we showed
  $|s|=|s'|$. Consequently, $s$ and $s'$ are prefixes of
  $\posproj(u^{n-1})\mixed(u)$ of the same length and therefore
  $s=s'$.\qed
\end{proof}

\subsection{The main result}

The results of this paper are summarised in the following theorem. It
characterizes those trace monoids that can be embedded into the queue
monoid as well as those that embed into the direct product of two free
monoids. In particular, these two classes of trace monoids are the
same. And, in addition, given a finite independence alphabet, it is
decidable whether the generated trace monoid falls into this class.

\begin{theorem}\label{T main}
  Let $(\Gamma,I)$ be a countable independence alphabet. Then the
  following are equivalent:
  \begin{enumerate}[(1)]
  \item The trace monoid $\mathbb M(\Gamma,I)$ embeds into the queue
    monoid $Q$.
  \item The trace monoid $\mathbb M(\Gamma,I)$ embeds into the direct
    product $\{a,b\}^*\times\{c,d\}^*$ of two free monoids.
  \item One of the following conditions hold:
    \begin{enumerate}[(3.a)]
    \item All nodes in $(\Gamma,I)$ have degree $\le 1$.
    \item The independence alphabet $(\Gamma,I)$ has only one
      non-trivial connected component and this component is complete
      bipartite.
    \end{enumerate}
  \end{enumerate}
\end{theorem}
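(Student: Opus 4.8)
The plan is to prove the cycle of implications $(3)\Rightarrow(2)\Rightarrow(1)\Rightarrow(3)$.

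The step $(2)\Rightarrow(1)$ is immediate: the unconditional commutations $a\overline b\equiv\overline b a$ yield an embedding of $\{a,b\}^*\times\{c,d\}^*$ into $Q$ (see \cite{HusKZ16}), so any trace monoid that embeds into $\{a,b\}^*\times\{c,d\}^*$ also embeds into $Q$. For $(3)\Rightarrow(2)$ I would exhibit explicit embeddings, using two elementary facts: every countable free monoid embeds into $\{a,b\}^*$ (e.g.\ via $z_j\mapsto a^jb$), and two non-empty words over a two-letter alphabet commute iff they are powers of a common word. In case (3.a) the graph $(\Gamma,I)$ is a disjoint union of edges and isolated vertices, so $\bM(\Gamma,I)$ is a free product of copies of $\bN^2$ and of $\bN$; for the edge $\{a_i,b_i\}$ I would fix pairwise distinct primitive words $r_i\in\{a,b\}^*$ and $t_i\in\{c,d\}^*$ and set $a_i\mapsto(r_i,t_i^2)$ and $b_i\mapsto(r_i^2,t_i)$. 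Then $a_i$ and $b_i$ commute, generators with distinct indices never commute, the exponent pair can be read back from $(r_i^{s+2t},t_i^{2s+t})$, and a run-length analysis of the first coordinate shows the induced map on the whole free product is injective; isolated vertices are treated analogously in the first coordinate.

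In case (3.b), let $C=C_1\uplus C_2$ be the complete bipartite component; since $C_1$-letters pairwise fail to commute, likewise $C_2$-letters, while every $C_1$-letter commutes with every $C_2$-letter, one obtains $\bM(C,I\cap C^2)\cong C_1^*\times C_2^*$ and hence $\bM(\Gamma,I)\cong(C_1^*\times C_2^*)*F$, with $F$ free on the isolated vertices. Here I would embed $C_1^**F$ into $\{a,b\}^*$ so that the images of $C_1$ and of $F$ freely generate their submonoid, embed $C_2^*$ into $\{c,d\}^*$, and then send $C_1$ into the first coordinate, $C_2$ into the second, and the $F$-generators to suitable pairs not commuting with the images of $C_1$ and $C_2$; injectivity is read off coordinatewise.

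The implication $(1)\Rightarrow(3)$ is the heart of the matter and I would prove its contrapositive. A short graph-theoretic argument shows that if (3) fails then $(\Gamma,I)$ contains one of the three induced subgraphs $C_3$, $P_4$, or $P_3\uplus K_2$: failure of (3.a) gives a vertex of degree $\ge2$; if in addition no triangle and no induced $P_4$ occur, then every non-trivial connected component is connected, triangle-free and $P_4$-free, hence complete bipartite (connected $P_4$-free graphs decompose as joins), so the degree-$\ge2$ vertex lies in a complete bipartite component on $\ge3$ vertices, which together with the further non-trivial component forced by the failure of (3.b) yields an induced $P_3\uplus K_2$. Since $\bM(H,I\cap H^2)$ embeds into $\bM(\Gamma,I)$ for every sub-alphabet $H\subseteq\Gamma$, it remains to show that none of $\bN^3=\bM(C_3)$, $\bM(P_4)$ and $\bM(P_3\uplus K_2)$ embeds into $Q$. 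For this I would use the normal-form calculus: Theorem~\ref{T:multiplication}, the fact that $u\equiv v$ is detected by $\posproj$, $\negproj$ and $\mixed$, and Lemma~\ref{L:power}. Assuming an embedding, one analyses the positive and negative projections and the central parts $\mixed$ of the images of the commuting generators and of their powers; the length estimates for $\mixed(u^n)$ from Lemma~\ref{L:power}, combined with the commutation identities, should force two distinct trace-monoid elements to share an image, contradicting injectivity.

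The main obstacle is precisely this last step, in particular the case $\bN^3\not\hookrightarrow Q$ (the conjecture of \cite{HusKZ16}): the conditional commutativity $ab\overline b\equiv a\overline b b$ makes embedded copies of $\bN^k$ hard to pin down, and the contradiction must be extracted purely from the behaviour of $\posproj$, $\negproj$ and $\mixed$ under products and powers.
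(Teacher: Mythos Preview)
Your cycle $(3)\Rightarrow(2)\Rightarrow(1)\Rightarrow(3)$ and the step $(2)\Rightarrow(1)$ match the paper. For $(3)\Rightarrow(2)$ the ideas are close; the paper's variant in case~(3.a) maps $a_i\mapsto(c_i,d_i)$, $b_i\mapsto(c_i,d_id_i)$ over countably infinite alphabets and checks injectivity via lexicographic normal forms (your run-length argument in disguise), and in case~(3.b) simply covers the dependence alphabet by the two cliques $\Gamma_1\cup\Gamma_3$ and $\Gamma_2\cup\Gamma_3$ (with $\Gamma_3$ the isolated vertices) and invokes the General Embedding Theorem---cleaner than your free-product decomposition, where verifying that the $F$-images interact correctly with both coordinates is not quite ``read off coordinatewise''. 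One caveat in your (3.a) sketch: ``pairwise distinct primitive'' $r_i$ is not enough for the run-length decoding; you need the $r_i$ to form a code, e.g.\ $r_i=a^ib$.

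Your $(1)\Rightarrow(3)$ is organized genuinely differently. You reduce (correctly) to the non-embeddability of $\bM(C_3)=\bN^3$, $\bM(P_4)$ and $\bM(P_3\uplus K_2)$. The paper instead fixes the embedding $\eta$, partitions $\Gamma=\Gamma_+\uplus\Gamma_-\uplus\Gamma_\pm$ according to which of $\posproj(\eta(a))$, $\negproj(\eta(a))$ vanish, and proves four structural facts about $(\Gamma,I)$ directly: $(\Gamma_+\cup\Gamma_-,I)$ is complete bipartite; vertices in $\Gamma_\pm$ have degree $\le1$; every vertex of $\Gamma_+\cup\Gamma_-$ meets every edge; and $(\Gamma,I)$ is $P_4$-free. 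These, together with Seinsche's theorem on connected $P_4$-free graphs, yield~(3). Your three non-embeddings would require exactly these ingredients anyway (for instance $\bN^3\not\hookrightarrow Q$ needs the first two, and $\bM(P_3\uplus K_2)\not\hookrightarrow Q$ the second and third), so the technical core---above all the delicate computation of $\mixed(u^{x_u}v^{x_v}w^{x_w})$, which the paper carries out by splitting on whether the primitive roots of $\posproj$ and $\negproj$ are conjugate and in each case manufacturing a nontrivial solution of $u^{x_u}v^{x_v}w^{x_w}\equiv u^{y_u}v^{y_v}w^{y_w}$---is identical. Your packaging gives a clean finite list of minimal obstructions; the paper's keeps the $\Gamma_+/\Gamma_-/\Gamma_\pm$ trichotomy available globally, so one does not have to rediscover inside each small configuration which generator has empty positive or negative projection.
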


The implication ``(2) implies (1)'' follows immediately from
\cite[Prop~8.2]{HusKZ14} since there, we showed that
$\{a,b\}^*\times\{c,d\}^*$ embeds into the queue monoid~$Q$. In the
following section, we present embeddings of $\bM(\Gamma,I)$ whenever
$(\Gamma,I)$ satisfies condition (3). The main work here is concerend
with independence alphabets satisfying (3.a). The subsequent section
shows that any trace monoid that embeds into the queue monoid
satisfies condition~(3). Technically, this proof is much harder than
the first one.

\section{(3) implies (2) in Theorem~\ref{T main}}

Let $(\Gamma,I)$ be an independence alphabet satisfying (3.a) or (3.b)
of Theorem~\ref{T main}. We will prove that $\mathbb M(\Gamma,I)$
embeds into the direct product of two free monoids
(Lemma~\ref{L:case_1}).

\begin{lemma}\label{L:case_1}
  Let $(\Gamma,I)$ be an (at most countably infinite) independence
  alphabet such that all nodes in $(\Gamma,I)$ have degree $\le
  1$. Then $\bM(\Gamma,I)$ embeds into the direct product of two
  countably infinite free monoids.
\end{lemma}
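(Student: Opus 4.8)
The plan is to construct the embedding $\varphi\colon\bM(\Gamma,I)\to\Delta^*\times\Delta^*$ (with $\Delta$ a countably infinite alphabet) as a pair $\varphi=(h_1,h_2)$ of homomorphisms $h_i\colon\bM(\Gamma,I)\to\Delta^*$. Since every node has degree $\le1$, the graph $(\Gamma,I)$ is a disjoint union of isolated vertices and single edges $\{a_k,b_k\}$; hence $(\Gamma,D)$ is the complement of a matching, whose cliques contain at most one endpoint of each edge, so a clique cover of $(\Gamma,D)$ needs unboundedly many cliques and Proposition~\ref{P:trace_monoid}(1) cannot be applied with only two factors. Instead I use that $\bM(\Gamma,I)$ is the free product, over the connected components $k$, of the component trace monoids, each of which is a copy of $\bN$ (isolated vertex) or of $\bN^2$ (edge $\{a_k,b_k\}$, with $a_k,b_k$ the commuting generators). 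For each component $k$ reserve two distinct letters $x_k,y_k\in\Delta$, and put $h_1(a_k)=h_1(b_k)=x_k$ and $h_2(a_k)=y_k$, $h_2(b_k)=y_k^2$ for an edge, and $h_1(v)=x_k$, $h_2(v)=y_k$ for an isolated vertex $v$. These maps respect the defining relations $a_kb_k\equiv_I b_ka_k$ (both sides map to $x_k^2$ under $h_1$ and to $y_k^3$ under $h_2$) and there are no other relations, so $h_1$, $h_2$, and hence $\varphi$, are well defined.

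For injectivity I would first isolate the combinatorial invariant of a trace. Factor any $w\in\Gamma^*$ as $w^{(1)}\cdots w^{(r)}$, each $w^{(i)}$ a maximal block of letters from one component $k_i$, with $k_i\ne k_{i+1}$. An elementary transposition $ab\to ba$ with $(a,b)\in I$ swaps two letters of the same component that are adjacent, hence lying in the same block; so it neither changes the sequence $(k_1,\dots,k_r)$ nor, for an edge block, the pair $(\alpha_i,\beta_i)$ counting occurrences of $a_{k_i}$ and of $b_{k_i}$ (resp.\ the single count $\alpha_i$ for an isolated-vertex block). Thus this data is an invariant of $[w]_I$ — in fact the free-product normal form — and conversely, sorting inside each edge block via $a_kb_k\equiv_I b_ka_k$ shows that $[w]_I$ is determined by it.

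It remains to recover this data from $\varphi([w]_I)$. By construction $h_1([w]_I)=x_{k_1}^{s_1}\cdots x_{k_r}^{s_r}$ and $h_2([w]_I)=y_{k_1}^{t_1}\cdots y_{k_r}^{t_r}$ with $(s_i,t_i)=(\alpha_i+\beta_i,\ \alpha_i+2\beta_i)$ for an edge block and $s_i=t_i=\alpha_i$ otherwise. The decisive point — and the only place the exact weights matter — is that each block is nonempty, so $s_i\ge1$ and $t_i\ge1$; since moreover $k_i\ne k_{i+1}$ and the letters $x_k$ are pairwise distinct, the successive maximal letter-runs of $h_1([w]_I)$ already yield $r$, the $k_i$ and the $s_i$, and likewise $h_2([w]_I)$ yields the $t_i$; finally $(\alpha_i,\beta_i)=(2s_i-t_i,\ t_i-s_i)$. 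Hence $\varphi$ is injective. I expect the one genuine subtlety to be exactly this ``no hidden blocks'' phenomenon: the tempting choice $h_2(b_k)=\varepsilon$ (making $h_2$ a projection onto the write-letters) makes pure-$b_k$ blocks invisible to $h_2$, collapsing non-equivalent traces such as $a_kb_\ell b_k$ and $b_kb_\ell a_k$; giving every generator a strictly positive weight in both coordinates while keeping the two weight vectors of each edge linearly independent is what avoids this. (A countably infinite $\Delta$ is needed because there can be infinitely many components; replacing the two copies of $\Delta^*$ by copies of $\{a,b\}^*$ afterwards, if one wants, is routine.)
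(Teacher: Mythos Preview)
Your proof is correct and essentially the same as the paper's. You define the identical embedding $a_k\mapsto(x_k,y_k)$, $b_k\mapsto(x_k,y_k^2)$ and prove injectivity via the same block decomposition (the paper calls it the lexicographic normal form, which for this particular trace monoid is exactly your free-product normal form $a_{k_1}^{\alpha_1}b_{k_1}^{\beta_1}\cdots a_{k_r}^{\alpha_r}b_{k_r}^{\beta_r}$ with $\alpha_i+\beta_i>0$ and $k_i\neq k_{i+1}$), recovering the exponents from the pair $(s_i,t_i)=(\alpha_i+\beta_i,\alpha_i+2\beta_i)$ just as you do.
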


\begin{proof}
  Consider the independence alphabet $(\Sigma,I)$ with
  $\Sigma=\{a_i,b_i\mid i\in\bN\}$ and 
  \[
    I=\{(a_i,b_i),(b_i,a_i)\mid i\in\bN\}\,.
  \]  
  Then $(\Gamma,I)$ can be seen as a sub-alphabet of $(\Sigma,I)$ so
  that $\bM(\Gamma,I)$ embeds into $\bM(\Sigma,I)$. 

  We embed $\bM(\Sigma,I)$ into the direct product
  \[
     M=\{c_i\mid i\in\bN\}\times\{d_i\mid i\in\bN\}\,.
  \]
  Note that in this monoid $(c_i,d_i)$ and $(c_i,d_id_i)$
  commute. Hence there is a homomorphism
  $\eta\colon\bM(\Sigma,I)\to M$ with $\eta(a_i)=(c_i,d_i)$ and
  $\eta(b_i)=(c_i,d_id_i)$ for all $i\in\bN$.

  To show that this homomorphism is injective, we use lexicographic
  normal forms. So let $\sqsubseteq$ be a linear order on $\Sigma$
  with $a_i\sqsubset b_i$ for all $i\in\bN$.  Now let $u\in \Sigma^*$
  be in lexicographic normal form wrt.\ $\sqsubseteq$. Then the word
  $u$ has the form
  \[
    u= a_{i_1}^{k_1} b_{i_1}^{\ell_1} a_{i_2}^{k_2} b_{i_2}^{\ell_2}\cdots
       a_{i_s}^{k_s} b_{i_s}^{\ell_s}
  \]
  where $i_a\in\bN$, $k_a+\ell_a>0$ for all $1\le a\le s$ and
  $i_a\neq i_{a+1}$ for all $1\le a<s$. The image of $u$ equals
  \[
    \eta(u)=
    \begin{pmatrix}
      c_{i_1}^{k_1+\ell_1} & c_{i_2}^{k_2+\ell_2} & \cdots
        & c_{i_s}^{k_s+\ell_s}\\
      d_{i_1}^{k_1+2\ell_1} & d_{i_2}^{k_2+2\ell_2} & \cdots
        & d_{i_s}^{k_s+2\ell_s}\\
    \end{pmatrix}\,.
  \]
  Next let also $v$ be a word in lexicographic normal form:
  \[
    v= a_{j_1}^{m_1} b_{j_1}^{n_1} a_{j_2}^{m_2} b_{j_2}^{n_2}\cdots
       a_{j_t}^{m_t} b_{j_t}^{n_t}
  \]
  where $j_a\in\bN$, $m_a+n_a>0$ for all $1\le a\le t$ and
  $j_a\neq j_{a+1}$ for all $1\le a<t$. The image of $u'$ equals
  \[
    \eta(v)=
    \begin{pmatrix}
      c_{j_1}^{m_1+n_1} & c_{j_2}^{m_2+n_2} & \cdots & c_{j_t}^{m_t+n_t}\\ 
      d_{j_1}^{m_1+2n_1} & d_{j_2}^{m_2+2_2} & \cdots & d_{j_t}^{m_t+2_t}
    \end{pmatrix}\,.
  \]
  Suppose $\eta(u)=\eta(v)$. Since all the exponents of $c_i$ and
  $d_i$ in the expressions for $\eta(u)$ and for $\eta(v)$ are
  positive and consecutive $c_i$ and $d_i$ have distinct indices, we
  obtain $s=t$, $i_a=j_a$, $k_a+\ell_a=m_a+n_a$ and
  $k_a+2\ell_a=m_a+2n_a$ for all $1\le a\le s$. Hence $k_a=m_a$ and
  $\ell_a=n_a$ for all $1\le a\le s$ and therefore $u=v$. Hence $\eta$
  embeds $\bM(\Sigma,I)$ into $M$ and we get\\

  \hspace*{\fill} $\bM(\Gamma,I)\hookrightarrow
  \bM(\Sigma,I)\hookrightarrow M\,.$\hspace*{\fill}\qed
\end{proof}

\begin{theorem}\label{T main2}
  Let $(\Gamma,I)$ be an independence alphabet such that one of the
  following conditions holds:
  \begin{enumerate}
  \item all nodes in $(\Gamma,I)$ have degree $\le 1$ or
  \item $(\Gamma,I)$ has only one non-trivial connected component and
    this component is complete bipartite 
  \end{enumerate} 
  Then $M(\Gamma,I)$ embeds into $\{a,b\}^*\times\{c,d\}^*$.
\end{theorem}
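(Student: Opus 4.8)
The plan is to reduce Theorem~\ref{T main2} to the combination of Lemma~\ref{L:case_1}, which already handles case~(1), and Proposition~\ref{P:trace_monoid}(1), which handles case~(2) once we observe that a complete bipartite dependence structure is covered by two cliques. So the real content is to settle case~(2), and then to note that in both cases the two factors can be taken to be copies of $\{a,b\}^*$.

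First, suppose $(\Gamma,I)$ satisfies~(2), so there is a single nontrivial connected component $C$ with $I\cap C^2=(C_1\times C_2)\cup(C_2\times C_1)$ for a partition $C=C_1\uplus C_2$, and every vertex outside $C$ is isolated. I would set $\Gamma_1=C_1\cup\{\text{isolated vertices}\}$ and $\Gamma_2=C_2$. Then two letters are \emph{dependent} precisely when they lie in the same $\Gamma_i$: within $\Gamma_1$ no two letters are independent (letters of $C_1$ are pairwise dependent since the component is bipartite, isolated letters are dependent on everything, including each other), within $\Gamma_2$ likewise, and a letter of $\Gamma_1\cap C_1$ and a letter of $\Gamma_2$ are independent by complete bipartiteness. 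Hence $\Gamma=\Gamma_1\cup\Gamma_2$ with $D=(\Gamma_1\times\Gamma_1)\cup(\Gamma_2\times\Gamma_2)$, which is exactly the hypothesis of Proposition~\ref{P:trace_monoid}(1) with index set $\{1,2\}$. That proposition gives an embedding of $\bM(\Gamma,I)$ into $\{a,b\}^*\times\{a,b\}^*$, which is literally $\{a,b\}^*\times\{c,d\}^*$ after renaming the alphabet of the second factor.

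For case~(1), Lemma~\ref{L:case_1} already gives an embedding into a product of two \emph{countably infinite} free monoids, say $\{c_i\mid i\in\bN\}^*\times\{d_i\mid i\in\bN\}^*$. To land in $\{a,b\}^*\times\{c,d\}^*$ I would compose with the standard embedding of a countably infinite free monoid into $\{a,b\}^*$ (e.g.\ $c_i\mapsto ab^ia$, which is a prefix code, hence induces an injective homomorphism), applied coordinatewise; this sends $\{c_i\}^*\times\{d_i\}^*$ injectively into $\{a,b\}^*\times\{a,b\}^*\cong\{a,b\}^*\times\{c,d\}^*$. Composing the two injective homomorphisms yields the desired embedding. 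Alternatively, one can simply remark that $\{a,b\}^*$ already embeds $\bN\cup\{\aleph_0\}$-many-generated free monoids, so ``two free monoids'' and ``two copies of $\{a,b\}^*$'' are interchangeable targets throughout.

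I expect no serious obstacle here; the only thing requiring a moment's care is the bookkeeping in case~(2), namely checking that putting the isolated vertices into $\Gamma_1$ (together with $C_1$) really does make $D$ split as $(\Gamma_1)^2\cup(\Gamma_2)^2$ with no leftover dependent pairs straddling the partition — this uses both directions of ``complete bipartite'' (edges go across, so cross-pairs are in $I$) and the fact that $C_1$ is an independent set inside a bipartite component (so same-side pairs in $C$ are in $D$), plus the triviality that isolated vertices contribute only to $D$. Once that is verified, both cases are immediate appeals to earlier results, so I would keep the proof to a few lines.
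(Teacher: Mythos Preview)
There is a genuine gap in your handling of case~(2). You partition $\Gamma$ into $\Gamma_1=C_1\cup\{\text{isolated vertices}\}$ and $\Gamma_2=C_2$ and claim that $D=(\Gamma_1\times\Gamma_1)\cup(\Gamma_2\times\Gamma_2)$. But an isolated vertex $v$ is dependent on \emph{every} other letter, in particular on every letter of $C_2=\Gamma_2$; the dependent pair $(v,c_2)$ with $c_2\in C_2$ straddles your partition and is not contained in $\Gamma_1^2\cup\Gamma_2^2$. Your own checklist at the end (``isolated vertices contribute only to $D$'') is exactly what breaks the claim: it forces such cross pairs into $D$ while your partition places them outside both squares. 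So the hypothesis of Proposition~\ref{P:trace_monoid}(1) fails for this choice.

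The repair is immediate once you notice that Proposition~\ref{P:trace_monoid}(1) asks for a clique \emph{cover} of $(\Gamma,D)$, not a partition: put the isolated vertices $\Gamma_3$ into \emph{both} cliques, taking the cover $C_1\cup\Gamma_3$ and $C_2\cup\Gamma_3$. Then $(C_1\cup\Gamma_3)^2\cup(C_2\cup\Gamma_3)^2$ really does equal $D$, and the proposition applies. This is precisely the cover the paper uses. Your treatment of case~(1) is correct and matches the paper's argument.
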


\begin{proof}
  Let $(\Gamma,I)$ be such that the first condition holds, i.e., all
  nodes in $(\Gamma,I)$ have degree $\le 1$. Then by
  Lemma~\ref{L:case_1} there is an embedding of $M(\Gamma,I)$ into a
  direct product of two countably infinite free monoids.

  Now let $(\Gamma,I)$ be such that the second condition holds, i.e.,
  $(\Gamma,I)$ has only one non-trivial connected component and this
  component is complete bipartite. In other words,
  $\Gamma=\Gamma_1\uplus \Gamma_2\uplus \Gamma_3$ with
  $I=\Gamma_1\times\Gamma_2\cup\Gamma_2\times\Gamma_1$. Then the
  corresponding dependence alphabet $(\Gamma,D)$ can be covered by the
  two cliques induced by $\Gamma_1\cup\Gamma_3$ and
  $\Gamma_2\cup\Gamma_3$. Consequently, \cite[Corollary 1.4.5 (General
  Embedding Theorem), p.~26]{Die90} implies that $M(\Gamma,I)$ is a
  submonoid of a direct product of two countably infinite free
  monoids.

  Note that the countably infinite free monoid $\{a_i\mid i\in\bN\}^*$
  embeds into $\{a,b\}^*$ via $a_i\mapsto a^ib$. Hence, in any case,
  $\bM(\Gamma,I)$ embeds into $\{a,b\}^*\times\{c,d\}^*$.\qed
  \end{proof}

\section{(1) implies (3) in Theorem~\ref{T main}}

\begin{definition}
Let $(\Gamma,I)$ be an independence alphabet and
$\eta\colon\mathbb M(\Gamma,I)\hookrightarrow Q$ be an embedding. We
partition $\Gamma$ into sets $\Gamma_+$, $\Gamma_-$, and $\Gamma_\pm$
according to the emptiness of the projections of $\eta(a)$:
\begin{itemize}
\item $a\in\Gamma_+$ iff $\posproj(\eta(a))\neq\varepsilon$ and
  $\negproj(\eta(a))=\varepsilon$.
\item $a\in\Gamma_-$ iff $\posproj(\eta(a))=\varepsilon$ and
  $\negproj(\eta(a))\neq\varepsilon$.
\item $a\in\Gamma_\pm$ iff $\posproj(\eta(a))\neq\varepsilon$ and
  $\negproj(\eta(a))\neq\varepsilon$.
\end{itemize}
\end{definition}

We will prove the following:
\begin{itemize}
\item $(\Gamma_+\cup\Gamma_-,I)$ is complete bipartite
  (Proposition~\ref{P: Gamma+ cup Gamma-}).
\item Every node $a\in\Gamma_\pm$ has degree $\le1$ (Corollary~\ref{C:
    small degree} which is the most difficult part of the proof).
\item Any letter from $\Gamma_+\cup\Gamma_-$ is connected to any edge
  (Proposition~\ref{P: P2 cup P3}).
\item The graph $(\Gamma,I)$ is $P_4$-free (Proposition~\ref{P:
    P4-free}).
\end{itemize}
At the end of this section, we infer that the independence alphabet
$(\Gamma,I)$ has the required property from Theorem~\ref{T main}~(3).

\subsection{The set $\Gamma_+\cup\Gamma_-$ induces a complete
  bipartite subgraph of $(\Gamma,I)$}

\begin{proposition}\label{P: Gamma+ cup Gamma-}
  Let $(\Gamma,I)$ be an independence alphabet, let
  $\eta\colon\mathbb{M}(\Gamma,I)\hookrightarrow Q$ be an embedding . 

  Then $(\Gamma_+,I)$ and $(\Gamma_-,I)$ are discrete and
  $(\Gamma_+\cup\Gamma_-,I)$ is complete bipartite.
\end{proposition}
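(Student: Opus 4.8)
The plan is to prove separately that $(\Gamma_+,I)$ and $(\Gamma_-,I)$ contain no edges and that every pair in $\Gamma_+\times\Gamma_-$ is an edge; together these three facts say precisely that $(\Gamma_+\cup\Gamma_-,I)$ is complete bipartite with parts $\Gamma_+$ and $\Gamma_-$. I will use throughout that for $a\in\Gamma_+$ the element $\eta(a)$ has empty negative projection, so its normal form lies in $A^*$ and $\eta(a)=[x]$ is a ``pure write'' with $x=\posproj(\eta(a))\in A^*$ nonempty, and symmetrically that for $b\in\Gamma_-$ the element $\eta(b)=[\overline y]$ is a ``pure read'' with $y=\negproj(\eta(b))\in A^*$ nonempty (nonempty because a generator $a$ of a trace monoid satisfies $a\neq1$, lengths being preserved by $\equiv_I$, and $\eta$ is injective). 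I will also use the elementary remark that a word over a one-letter alphabet, and more generally any word to which no $\equiv_I$-swap applies, constitutes a singleton $\equiv_I$-class.

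\emph{$(\Gamma_+,I)$ and $(\Gamma_-,I)$ are discrete.} The pure writes $\{[x]\mid x\in A^*\}$ form a submonoid of $Q$ on which $\posproj$ is an injective homomorphism, hence a free monoid. Suppose, for contradiction, that $a,b\in\Gamma_+$ and $(a,b)\in I$. Then $\eta(a)$ and $\eta(b)$ commute inside this free monoid, so by the standard fact about commuting elements of a free monoid, $\eta(a)=[w]^{k}$ and $\eta(b)=[w]^{\ell}$ for a common $w\in A^*$ and some $k,\ell\ge1$ (both $\ge1$ since $\eta(a),\eta(b)\neq1$). Consequently $\eta(a^{\ell})=[w]^{k\ell}=\eta(b^{k})$, so injectivity of $\eta$ gives $a^{\ell}\equiv_I b^{k}$; but $a\neq b$ by irreflexivity of $I$, so $a^{\ell}$ and $b^{k}$ lie over disjoint one-letter alphabets and form singleton classes, whence $a^{\ell}=b^{k}$ --- impossible for $k,\ell\ge1$. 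Thus $(\Gamma_+,I)$ is discrete, and the symmetric argument, using the free submonoid of pure reads $\{[\overline x]\mid x\in A^*\}$ and the homomorphism $\negproj$, shows that $(\Gamma_-,I)$ is discrete.

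\emph{Every pair in $\Gamma_+\times\Gamma_-$ is an edge.} Let $a\in\Gamma_+$, $b\in\Gamma_-$, put $x=\posproj(\eta(a))$ and $y=\negproj(\eta(b))$, and suppose for contradiction that $(a,b)\in D$. Using Theorem~\ref{T:multiplication} I first compute that $\eta(a^{i}b)=[x^{i}\overline y]$ has central part $\ol(y,x^{i})$; multiplying on the right by $\eta(a^{j})=[x^{j}]$ and using that $\ol(y,x^{i})$ is a prefix of $x^{i}$, the overlap of $\ol(y,x^{i})$ with $x^{i}$ is $\ol(y,x^{i})$ itself, so
\[
  \mixed(\eta(a^{i}ba^{j}))=\ol(y,x^{i})\quad\text{for all }i\ge1,\ j\ge0 ,
\]
while $\posproj(\eta(a^{i}ba^{j}))=x^{i+j}$ and $\negproj(\eta(a^{i}ba^{j}))=y$. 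The words $\ol(y,x^{i})$ are all suffixes of $y$, and since every prefix of $x^{i}$ is a prefix of $x^{i+1}$ their lengths are nondecreasing in $i$; being bounded by $|y|$, the sequence is eventually constant, so there is $N\ge1$ with $\ol(y,x^{N})=\ol(y,x^{N+1})$. Then $\eta(a^{N}ba)$ and $\eta(a^{N+1}b)$ agree in positive projection, negative projection and central part, hence are equal in $Q$; but since $(a,b)\in D$ no $\equiv_I$-swap applies to $a^{N}ba$ or to $a^{N+1}b$, so these two words represent distinct traces, contradicting injectivity of $\eta$. Therefore $(a,b)\in I$, which completes the proof.

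The two discreteness claims are immediate once one observes that the pure writes and pure reads form free submonoids of $Q$, so the core of the argument is the last part. The routine but slightly technical point there is pushing Theorem~\ref{T:multiplication} through the two products and checking that the overlaps $\ol(y,x^{i})$ form a nondecreasing chain of suffixes of $y$. The conceptual point --- and the place I expect the real difficulty --- is choosing the test family $a^{i}ba^{j}$: it is designed so that once the overlap $\ol(y,x^{i})$ has stabilised, the ``conditional commutativity'' of the queue lets a trailing $a$ be moved past $b$ without changing the element of $Q$, which is exactly the collapse of $\eta$ that forbids a dependence between $\Gamma_+$ and $\Gamma_-$.
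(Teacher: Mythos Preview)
Your proof is correct. The discreteness arguments for $(\Gamma_+,I)$ and $(\Gamma_-,I)$ are essentially identical to the paper's: both use that commuting elements of a free monoid share a primitive root, derive $\eta(a^\ell)=\eta(b^k)$, and contradict injectivity.

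For the claim that every pair in $\Gamma_+\times\Gamma_-$ is independent, you take a genuinely different route from the paper. The paper appeals directly to Lemma~\ref{L:generalized_equations}: with $\eta(a)=[u]$ and $\eta(b)=[\overline v]$ it observes that $|u|\le |v^{|u|}|$ forces $u\,\overline v\,\overline{v}^{|u|}\equiv \overline v\,u\,\overline{v}^{|u|}$, hence $\eta(abb^{|u|})=\eta(bab^{|u|})$, and then cancels $b^{|u|}$ on the right in the trace monoid to get $ab\equiv_I ba$. You instead compute central parts via Theorem~\ref{T:multiplication}, obtaining $\mixed(\eta(a^i b a^j))=\ol(y,x^i)$, and then use a stabilisation argument on the bounded nondecreasing chain $(\ol(y,x^i))_{i\ge1}$ to produce the collision $\eta(a^Nba)=\eta(a^{N+1}b)$. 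The paper's approach is shorter and more direct because it leverages the ready-made commutation lemma; your approach is more self-contained in that it works straight from normal forms and the multiplication formula, and it nicely isolates \emph{why} the queue relations eventually let an $a$ slide past the $b$ (namely, the overlap has saturated). Both arguments are dual in shape: the paper pads on the right with copies of $b$, you pad on the left with copies of $a$.
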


\begin{proof}
  We first show that $(\Gamma_+,I)$ is discrete.

  Towards a contradiction, suppose there are $a,b\in\Gamma_+$ with
  $(a,b)\in I$. Let $u=\posproj(\eta(a))$ and $v=\posproj(\eta(b))$. Since
  $\posproj\circ\eta\colon\mathbb M(\Gamma,I)\to A^*$ is a homomorphism and
  since $[ab]_I=[ba]_I$, we get $uv=vu$. Hence $u$ and $v$ have a
  common root, i.e., there is a word $p$ and there are $i,j>0$ with
  $u=p^i$ and $v=p^j$. Hence
  \[
     \posproj(\eta(a)^j) = u^j = v^i = \posproj(\eta(b)^i)\,.
  \]
  Clearly, we also have
  \[
     \negproj(\eta(a)^j) = \varepsilon = \negproj(\eta(b)^i)\,.
  \]
  Hence 
    \[
     \eta(a)^j= [u^j]= [v^i] =\eta(b)^i.
  \]
 Since $\eta$ is
  injective, this implies $a^i\equiv_I b^j$ and therefore $a=b$,
  contradicting $(a,b)\in I$. Hence, there are no $a,b\in\Gamma_+$
  with $(a,b)\in I$, i.e., $(\Gamma_+,I)$ is discrete.

  Symmetrically, also $(\Gamma_-,I)$ is discrete.

  It remains to be shown that $(a,b)\in I$ for any $a\in\Gamma_+$ and
  $b\in\Gamma_-$. So let $a\in\Gamma_+$ and $b\in\Gamma_-$. Then there
  are words $u,v\in A^*$ with $\eta(a)=[u]$ and
  $\eta(b)=[\overline{v}]$ (note that $u$ and $v$ are nonempty since
  $\eta$ is an injection). We have the following:
  \begin{align*}
    \eta(abb^{|u|}) &= [u\overline{v} \overline{v}^{|u|}]\\
      &= [\overline{v}u \overline{v}^{|u|}] 
        &\text{by Lemma~\ref{L:generalized_equations} since }|u|\le|v^{|u|}|\\
      &= \eta(bab^{|u|})
  \end{align*}
  Since $\eta$ is injective, this implies $abb^{|u|}\equiv_I
  bab^{|u|}$ and therefore $ab\equiv_I ba$. Now $(a,b)\in I$ follows
  from $a\neq b$.\qed
\end{proof}

\subsection{Nodes from $\Gamma_+\cup\Gamma_-$ are connected to any edge}

\begin{lemma}\label{L:nontrivial_equation_P2_cup_P3}
  Let $u,v,w\in \Sigma^+$ such that $\negproj(u)=\varepsilon$,
  $vw\equiv wv$ and $v\neq w$. Then there exist vectors
  $\vec{x}=(x_u,x_v,x_w)$ and $\vec{y}=(y_u,y_v,y_w)$ in $\bN^3$ such
  that $x_v+x_w\neq0$ and
  \begin{equation}
    \label{eq: nontrivial equation P_2 cup P_3}
    u^{x_u} v^{x_v} u w^{x_w} \equiv u^{y_u}  w^{y_w} u v^{y_v}\,.
  \end{equation}
  (Note that the two sides of this equation differ in particular in
  the order of the words $v$ and $w$.)
\end{lemma}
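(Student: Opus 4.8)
The plan is to work inside the queue monoid via the normal-form machinery (Theorems~\ref{T:multiplication} and~\ref{L:generalized_equations} and Lemma~\ref{L:power}), producing the desired equation as an equality of normal forms. Since $\negproj(u)=\varepsilon$, the element $[u]$ is a pure write, so $\posproj(u)=u$ (abusing notation for the underlying word in $A^*$) and $\mixed(u)=\varepsilon$. The hypothesis $vw\equiv wv$ forces $\posproj(v)$ and $\posproj(w)$ to commute in $A^*$ and likewise $\negproj(v)$ and $\negproj(w)$; hence both pairs share a common root. First I would split into cases according to how $v$ and $w$ relate. If one of the projections is trivial for both $v$ and $w$ — say $\negproj(v)=\negproj(w)=\varepsilon$ so that $v,w$ are also pure writes — then $v,w$ are powers of a common word $p$, and $v\neq w$ means the exponents differ; in that subcase the sought equation can be produced directly from the commutation $u^{x_u}v^{x_v}uw^{x_w}=u^{y_u}w^{y_w}uv^{y_v}$ read purely in $A^*$ after choosing exponents so both sides have the same positive projection and (trivially) the same mixed part and negative projection. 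Symmetrically if both $v,w$ are pure reads.

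The interesting case is when $v$ and $w$ genuinely involve both a positive and a negative projection. Here I would use Lemma~\ref{L:generalized_equations} as the main tool: by padding with enough copies of $u$ on the left (recall $u$ is a pure write), one can slide the ``read'' part of a power of $v$ or $w$ past a block of $u$'s, exactly as in the proof of Proposition~\ref{P: Gamma+ cup Gamma-}. Concretely, I expect to choose the exponents $x_v,x_w,y_v,y_w$ to be equal common multiples $N$ (so that $v^N$ and $w^N$ have positive projection $\posproj(v)^{N}=\posproj(w)^{?}$ agreeing up to the common root, and similarly for negatives), and then pick $x_u,y_u$ large enough that on both sides a single ``central'' $u$ can be commuted across so that the two sides reduce to the same normal form $\NF{\cdot}{\mixed}{\cdot}$. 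The key computation is to evaluate $\mixed$ of each side using Theorem~\ref{T:multiplication} iteratively (or Lemma~\ref{L:power} applied to $[v]$ and $[w]$): since $v$ and $w$ commute, $\mixed(v^N)$ and $\mixed(w^N)$ stabilise to overlaps determined by the common roots of the positive and negative projections, and I would show these are literally equal once $N$ is large, which makes the mixed parts of the two sides coincide. The positive and negative projections of the two sides agree because $\posproj,\negproj$ are homomorphisms to the commutative-on-$\{v,w\}$ parts of $A^*$. Then the three-coordinate criterion (equality of $\posproj$, $\negproj$, and $\mixed$) closes the case.

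The main obstacle, and where I would spend the most care, is ensuring that the chosen exponents simultaneously (i) keep $x_v+x_w\neq 0$, (ii) make the \emph{central} occurrence of $u$ on each side actually commutable — this needs a length inequality of the form $|u|\le|\negproj(\cdot)|$ or $|u|\ge|\posproj(\cdot)|$ from Lemma~\ref{L:generalized_equations}, which is why the surrounding powers of $v,w$ must be taken large — and (iii) force the mixed parts to agree exactly rather than just ``up to rotation.'' The condition $v\neq w$ must be used to guarantee that the resulting equation is not the trivial one with $v$ and $w$ in the same order; I would verify that the two sides as written, $u^{x_u}v^{x_v}uw^{x_w}$ versus $u^{y_u}w^{y_w}uv^{y_v}$, are genuinely distinct words of $\Sigma^*$ yet equivalent, by checking they are not already equal as elements unless $[v]=[w]$, which by injectivity of any embedding (or directly, by the three-coordinate criterion applied to $v$ and $w$) is ruled out by $v\neq w$ — note however the lemma statement only assumes $v\neq w$ as words, so I would treat the degenerate possibility $[v]=[w]$ (if it can occur) by observing that then the claimed equation holds vacuously with a symmetric choice. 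A careful bookkeeping of the overlap lengths, using that $\posproj(u)=u$ has a fixed length $|u|$ while $\posproj(v)^N,\negproj(v)^N$ grow without bound, is what ultimately pins everything down.
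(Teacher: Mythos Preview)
Your proposal has a genuine gap at the point where you choose the exponents. Taking $x_v=x_w=y_v=y_w=N$ does \emph{not} make the positive projections of the two sides agree: since $\negproj(u)=\varepsilon$ we have $\posproj(u)=u$, and the central occurrence of $u$ separates the positive projection into two blocks. Concretely, writing $\posproj(v)=p^{a_v}$ and $\posproj(w)=p^{a_w}$ for a common primitive root $p$, the positive projections of the two sides are
\[
   u^{x_u}\,p^{a_v N}\,u\,p^{a_w N}
   \quad\text{and}\quad
   u^{y_u}\,p^{a_w N}\,u\,p^{a_v N}\,.
\]
For these to coincide in $A^*$ you would need $p^{a_v N}u\,p^{a_w N}=p^{a_w N}u\,p^{a_v N}$, which forces $u$ to commute with a power of $p$ whenever $a_v\neq a_w$; that is not given. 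The equality of positive projections therefore imposes the constraints $a_v x_v=a_w y_w$ and $a_w x_w=a_v y_v$, and together with matching the negative projections (which, unlike the positive ones, can be merged into a single block) one gets $b_v x_v+b_w x_w=b_w y_w+b_v y_v$. Finding a nontrivial solution of this $3\times 4$ linear system in $\bN$ is the actual content of the lemma and requires a short case analysis; this is what the paper does.

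A second issue is methodological: you plan to compute $\mixed$ of each side via Theorem~\ref{T:multiplication} and Lemma~\ref{L:power}, but this is both harder and unnecessary. The paper's route is simpler: once $x_u=y_u$ is chosen large enough that $|u^{x_u}|\ge|\negproj(v^{x_v}uw^{x_w})|$, a single application of Lemma~\ref{L:generalized_equations} rewrites each side as $u^{x_u}\,\overline{\negproj(\cdots)}\,\posproj(\cdots)$, and equivalence then reduces to the \emph{literal} equality of the negative parts and of the positive parts in $A^*$, with no overlap computations at all. Your claim that $\mixed(v^N)$ and $\mixed(w^N)$ ``stabilise'' and become equal for large $N$ is also unjustified (and in general false). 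Finally, the hypothesis $v\neq w$ is not used to rule out any degenerate equivalence; it is simply irrelevant to the construction once the linear system has a nontrivial solution, which it always does.
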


\begin{proof}
  Since $vw\equiv wv$, there exist primitive words $p$ and $q$ and
  natural numbers $a_v,a_w,b_v,b_w$ satisfying the following:
  \begin{align*}
    \posproj(v) &= p^{a_v} & \posproj(w) &= p^{a_w}\\
    \negproj(v) &= q^{b_v} & \negproj(w) &= q^{b_w}
  \end{align*}
  Since $v,w\neq\varepsilon$, we get $a_v+b_v\neq 0 \neq a_w+b_w$.

  We first show that there are natural numbers $x_v,x_w,y_v,y_w$ (not
  all zero) that satisfy the following system of linear equations.
  \begin{equation}
    \label{eq:linear-system1}
    \left.
    \begin{array}{rcl}
    a_v x_v &=& a_w y_w\\
    a_w x_w &=& a_v y_v\\
    b_v x_v + b_w x_w &=& b_w y_w + b_v y_v
    \end{array}
    \right\}
  \end{equation}
  If $a_v=0$, then set $x_v=y_v=1$ and $x_w=y_w=0$. Symmetrically, if
  $a_w=0$, we set $x_v=y_v=0$ and $x_w=y_w=1$. If $a_v b_w = a_w b_v$,
  then set $x_v=y_v=a_w+b_w>0$ and $x_w=y_w=a_v+b_v>0$.

  Now consider the case $a_v\neq 0\neq a_w$ and $a_vb_w\neq
  a_wb_v$. The system~\eqref{eq:linear-system1} has a nontrivial
  solution over the field $\mathbb Q$. Consequently, there are
  integers $x_v,x_w,y_v,y_w$ (not all zero) satisfying these
  equations. We show $x_v>0\iff x_w>0$: First note that $x_v\neq 0$
  iff $y_w\neq0$ and $x_w\neq0$ iff $y_v\neq 0$. Since not all of the
  integers $x_v,x_w,y_v,y_w$ are zero, we get $x_v\neq0$ or
  $x_w\neq0$. Furthermore, since we have a solution, we get
  \[
    y_w = \frac{a_v}{a_w}x_v\text{ and }
    y_v = \frac{a_w}{a_v}x_w\,.
  \]
  Substituting these into the third equation yields
  \[
    (b_v-b_w\frac{a_v}{a_w})\cdot x_v = (b_v\frac{a_w}{a_v}-b_w)\cdot x_w
    =(b_v-b_w\frac{a_v}{a_w})\cdot\frac{a_w}{a_v}\cdot x_w\,.
  \]
  From $a_vb_w\neq a_wb_v$, we get
  $b_v-b_w\frac{a_v}{a_w}\neq0$. Hence $x_v=\frac{a_w}{a_v}\cdot x_w$
  and therefore $a_vx_v=a_wx_w$ follow. Now $a_v,a_w>0$ imply
  $x_v>0\iff x_w>0$. Consequently, all of $x_v,x_w,y_v,y_w$ are
  non-negative or all are non-positive. Hence $|x_v|,|x_w|,|y_v|,|y_w|$
  is a solution to the system~\eqref{eq:linear-system1} in natural
  numbers as required.

  From now on, let $x_v,x_w,y_v,y_w\in\bN$ be a nontrivial solution of
  the system~\eqref{eq:linear-system1}. Furthermore, let
  $x_u=y_u\in\bN$ such that
  $|\negproj(v^{x_v}uw^{x_w})|\le|u|\cdot x_u=|u^{x_u}|$. Then we have
  the following:
  \begin{align*}
    u^{x_u} v^{x_v} u w^{x_w}
     &\equiv u^{x_u}\,\overline{\negproj(v^{x_v} u w^{x_w})}
                    \, \posproj(v^{x_v} u w^{x_w})
        &\text{by Lemma~\ref{L:generalized_equations}}\\
     &= u^{x_u}\,\overline{q}^{b_v x_v + b_w x_w} 
               \,p^{a_v x_v} u p^{a_w x_w}\\
     &= u^{y_u}\,\overline{q}^{b_w y_w + b_v y_v}
               \,p^{a_w y_w} u p^{a_v y_v}\\
     &= u^{y_u}\,\overline{\negproj(w^{y_w} u v^{y_v})}
               \,\posproj(w^{y_w} u v^{y_v})\\
     &\equiv u^{y_u} w^{y_w} u v^{y_v}
        &\text{by Lemma~\ref{L:generalized_equations}}
  \end{align*}
  Thus, we found the vectors $\vec{x}$ and $\vec{y}$ satisfying
  Equation~\eqref{eq: nontrivial equation P_2 cup P_3} with
  $x_v+x_w\neq0$.\qed
\end{proof}

\begin{proposition}\label{P: P2 cup P3}
  Let $(\Gamma,I)$ be an independence alphabet and let
  $\eta\colon\mathbb{M}(\Gamma,I)\hookrightarrow Q$ be an
  embedding. Let $a\in\Gamma_+\cup\Gamma_-$ and $b,c\in \Gamma$ with
  $(b,c)\in I$. Then $(a,b)\in I$ or $(a,c)\in I$.
\end{proposition}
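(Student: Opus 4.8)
The plan is to argue by contraposition: assume $(a,b)\notin I$ and $(a,c)\notin I$, i.e. $a$ is dependent on both $b$ and $c$, and derive a contradiction with the injectivity of $\eta$. Since $b$ and $c$ are adjacent in $(\Gamma,I)$ and distinct, their images $\eta(b),\eta(c)\in Q$ commute, so writing $v$ for a word representing $\eta(b)$ and $w$ for one representing $\eta(c)$, we have $vw\equiv wv$ and $v\neq w$ (the images are distinct because $\eta$ is injective and $b\neq c$). By symmetry between $\Gamma_+$ and $\Gamma_-$ — passing to the ``reversed'' queue monoid if necessary, or simply repeating the argument with the roles of $\posproj$ and $\negproj$ swapped — we may assume $a\in\Gamma_+$, so $\eta(a)=[u]$ for some $u\in A^*=\Sigma^+$ with $\negproj(u)=\varepsilon$.

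The key step is to apply Lemma~\ref{L:nontrivial_equation_P2_cup_P3} to these words $u,v,w$: it yields vectors $\vec x=(x_u,x_v,x_w)$ and $\vec y=(y_u,y_v,y_w)$ in $\bN^3$ with $x_v+x_w\neq 0$ and
\[
  u^{x_u} v^{x_v} u w^{x_w} \equiv u^{y_u} w^{y_w} u v^{y_v}\,.
\]
Applying $\eta$ (which is a homomorphism into $Q$, and the displayed words literally denote elements of $Q$ via the chosen representatives), this equation in $Q$ pulls back through the injectivity of $\eta$ to
\[
  a^{x_u} b^{x_v} a c^{x_w} \equiv_I a^{y_u} c^{y_w} a b^{y_v}
\]
in the trace monoid $\bM(\Gamma,I)$. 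Now I invoke the combinatorial structure of trace equivalence: since $a$ is dependent on both $b$ and $c$, any word equivalent to $a^{x_u} b^{x_v} a c^{x_w}$ must have the same sequence of occurrences of $b$'s and $c$'s relative to each $a$. Concretely, in the left-hand word the block of $b$'s (if $x_v>0$) precedes the middle occurrence of $a$, while the block of $c$'s (if $x_w>0$) follows it; in the right-hand word it is the other way around. Because $x_v+x_w\neq 0$, at least one of these blocks is nonempty, and its position relative to the isolated middle $a$ differs between the two sides. Since $a$ commutes with neither $b$ nor $c$, no rewriting can move a $b$ or a $c$ past that $a$, so the two words cannot be $\equiv_I$-equivalent — contradiction.

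I expect the main obstacle to be making the last combinatorial claim fully rigorous, i.e. extracting the contradiction from the equation in $\bM(\Gamma,I)$. The cleanest way is to use the cancellativity of the trace monoid (Proposition~\ref{P:trace_monoid}(2)) together with the projection homomorphisms onto sub-alphabets: project $\bM(\Gamma,I)$ onto the trace monoid generated by $\{a,b\}$ with $(a,b)\in D$ (a free monoid on two letters) and onto the one generated by $\{a,c\}$; one of these projections is nontrivial on the discrepancy between the two sides because $x_v+x_w\neq0$, and freeness forces the two projected words to be unequal. A small amount of care is needed to handle whether it is $b$ or $c$ that survives (i.e. whether $x_v\neq 0$ or $x_w\neq 0$, and likewise on the $y$-side), but in each case the projection argument closes the gap. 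The remaining steps — the commutation of $\eta(b),\eta(c)$, the choice of representatives, and the symmetry reduction to $a\in\Gamma_+$ — are routine.
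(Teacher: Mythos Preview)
Your proposal is correct and follows essentially the same route as the paper: both arguments invoke Lemma~\ref{L:nontrivial_equation_P2_cup_P3} to obtain the relation $u^{x_u} v^{x_v} u w^{x_w} \equiv u^{y_u} w^{y_w} u v^{y_v}$ with $x_v+x_w\neq0$, pull it back through the injective $\eta$ to a trace equation, and then read off the required independence from the position of the nonempty $b$- or $c$-block relative to the isolated middle~$a$. The paper states the last step tersely (``if $x_v\neq0$, then $(a,b)\in I$''), while you spell it out via the projection onto the free submonoid on $\{a,b\}$ (respectively $\{a,c\}$); this is exactly the justification the paper leaves implicit, and your contrapositive framing is only a cosmetic difference.
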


\begin{proof}
  If $a\in\{b,c\}$, we get $(a,b)\in I$ or $(a,c)\in I$ from $(b,c)\in
  I$. So assume $a\notin\{b,c\}$. There are words $u,v,w\in\Sigma^+$
  with $\eta(a)=[u]$, $\eta(b)=[v]$, and $\eta(c)=[w]$. Since
  $(b,c)\in I$, we get $[vw]=\eta(bc)=\eta(cb)=[wv]$ and therefore
  $vw\equiv wv$. Furthermore, $[v]=\eta(b)\neq\eta(c)=[w]$ since
  $\eta$ is injective and since $b\neq c$ follows from $(b,c)\in
  I$. Hence in particular $v\neq w$.  

  We first consider the case $a\in\Gamma_+$, i.e.,
  $\negproj(u)=\varepsilon$.  From
  Lemma~\ref{L:nontrivial_equation_P2_cup_P3}, we find natural numbers
  $x_u,x_v,x_w,y_u,y_v,y_w$ with
  $u^{x_u} v^{x_v} u w^{x_w} \equiv u^{y_u} w^{y_w} u v^{y_v}$ and
  $x_v+x_w+y_v+y_w\neq 0$. Consequently,
  \begin{align*}
    \eta(a^{x_u} b^{x_v} a c^{x_w})
      &= [u^{x_u} v^{x_v} u w^{x_w} ]\\
      &= [u^{y_u} w^{y_w} u v^{y_v}]\\
      &=\eta(a^{y_u} c^{y_w} a b^{y_v})\,.
  \end{align*}
  Since $\eta$ is injective, this implies
  \[
    a^{x_u} b^{x_v} a c^{x_w} \equiv_I a^{y_u} c^{y_w} a b^{y_v}\,.
  \]
  If $x_v\neq0$, then $(a,b)\in I$. Similarly, if $x_w\neq0$, then
  $(a,c)\in I$. This settles the case $\negproj(u)=\varepsilon$.

  Now let $\posproj(u)=\varepsilon$. By duality,
  Lemma~\ref{L:nontrivial_equation_P2_cup_P3} yields natural numbers
  $x_u,x_v,x_w,y_u,y_v,y_w$ with $x_v+x_w+y_v+y_w\neq0$ and
  $v^{x_v} u w^{x_w} u^{x_u} \equiv w^{y_w} u v^{y_v} u^{y_u}$. Then
  we can derive $(a,b)\in I$ or $(a,c)\in I$ as above.\qed
\end{proof}


\subsection{Nodes from $\Gamma_\pm$ have degree $\le1$}

Let $a\in\Gamma_\pm$. Then there are nonempty primitive words $p$ and
$q$ with $\posproj(\eta(a))\in p^+$ and $\negproj(\eta(a))\in q^+$,
i.e., $p$ and $q$ are the primitive roots of the two projections of
$\eta(a)$. The proof of the fact that $a$ has at most one neighbor in
$(\Gamma,I)$ distinguishes two cases: first, we handle the case that
$p$ and $q$ are not conjugated (recall that $p$ and $q$ are
\emph{conjugated} iff there are words $g\in A^*$ and $h\in A^+$ with
$p=gh$ and $q=hg$). The second case, namely that $p$ and $q$ are
conjugated, turns out to be far more difficult.

\subsubsection{Non-conjugated roots}

\begin{proposition}\label{P: nonconjugated roots}
  Let $(\Gamma,I)$ be an independence alphabet and let
  $\eta\colon\mathbb{M}(\Gamma,I)\hookrightarrow Q$ be an
  embedding. Let furthermore $b\in\Gamma$ and $p,q\in A^+$ be
  primitive with $p\not\sim q$ such that
  \[
     \posproj(\eta(b))\in p^+\text{ and }\negproj(\eta(b))\in q^+\,.
  \]
  Then there is at most one letter $a\in\Gamma$ with $(a,b)\in I$.
\end{proposition}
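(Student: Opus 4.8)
\emph{Proof plan.} I would argue by contradiction: assume $b$ has two distinct neighbours $a_1,a_2$. Since $I$ is irreflexive, $a_1,a_2,b$ are three pairwise distinct letters. Writing $A_i=\eta(a_i)$ and $B=\eta(b)$, the relation $(a_i,b)\in I$ gives $a_ib\equiv_I ba_i$ and hence $A_iB=BA_i$ in $Q$. Applying the homomorphisms $\posproj,\negproj\colon Q\to A^*$ and using the standard fact that in a free monoid a word commuting with a power of a primitive word is itself a power of that word, I get $\posproj(A_i)\in p^*$ and $\negproj(A_i)\in q^*$ for $i=1,2$ (while $\posproj(B)\in p^+$ and $\negproj(B)\in q^+$ by hypothesis). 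Consequently, for all $i,j,k\ge 0$ the element $g=\eta(a_1^ia_2^jb^k)=A_1^iA_2^jB^k$ of $Q$ has $\posproj(g)\in p^*$ and $\negproj(g)\in q^*$.

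The crucial step --- and the only place the hypothesis $p\not\sim q$ is used --- is a uniform bound $|\mixed(g)|<|p|+|q|$ for every such $g$. Since $\mixed(g)$ is a prefix of $\posproj(g)$ it is a prefix of a power of $p$, hence has period $|p|$; since it is a suffix of $\negproj(g)$ it has period $|q|$. If $|\mixed(g)|\ge|p|+|q|$, the theorem of Fine and Wilf yields that $\mixed(g)$ has period $d=\gcd(|p|,|q|)$; its length-$|p|$ prefix equals $p$ and has period $d\mid|p|$, so $|p|=d$ by primitivity of $p$, and symmetrically $|q|=d$; then $\mixed(g)$ has period $d=|p|=|q|$ and length $\ge 2d$, so its length-$d$ suffix (which is $q$) is a cyclic shift of its length-$d$ prefix (which is $p$), i.e.\ $p\sim q$, a contradiction. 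Hence $|\mixed(g)|<|p|+|q|$, and as $\mixed(g)$ is a prefix of a power of $p$ it can take at most $|p|+|q|$ distinct values. I expect this boundedness of mixed parts to be the main obstacle; the rest is a counting argument.

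Finally, the traces $[a_1^ia_2^jb^k]_I$ (for $i,j,k\ge 0$) are pairwise distinct, because the vector of letter multiplicities is invariant under $\equiv_I$. Their images under $\eta$ are the elements $A_1^iA_2^jB^k$: the positive projection is $p^e$ with $e$ a fixed linear expression in $(i,j,k)$ with non-negative integer coefficients, the negative projection is $q^f$ with $f$ of the same form, and the mixed part lies in the bounded set obtained above. Restricting to $0\le i,j,k\le C$ thus gives $(C+1)^3$ pairwise distinct traces whose $\eta$-images assume at most $(\kappa C+1)(\lambda C+1)(|p|+|q|)$ values, for suitable constants $\kappa,\lambda$ depending only on $\eta$. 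For $C$ large enough, $(C+1)^3$ exceeds this bound, so two distinct triples are mapped to the same element of $Q$; by injectivity of $\eta$ the corresponding traces coincide, contradicting their distinctness. Therefore $b$ has at most one neighbour.
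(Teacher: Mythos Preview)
Your argument is correct and genuinely different from the paper's. Both proofs hinge on the same key fact---that when $p\not\sim q$, any element whose projections lie in $p^*$ and $q^*$ has mixed part of length $<|p|+|q|$ (the paper quotes \cite{LohM11}, you derive it from Fine--Wilf)---but you exploit it differently. The paper chooses the three elements $u=\eta(ab)$, $v=\eta(b)$, $w=\eta(bc)$, writes down a $2\times 3$ linear system on the exponents of $p$ and $q$, extracts an explicit nontrivial solution $(\vec x,\vec y)$ with $\vec x\neq\vec y$, and then argues directly that the mixed parts of $u^{x_u}v^{x_v}w^{x_w}$ and $u^{y_u}v^{y_v}w^{y_w}$ coincide; pulling back through $\eta$ and comparing Parikh vectors yields the contradiction. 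Your route is a pigeonhole: the cube $\{0,\dots,C\}^3$ of exponent triples for $A_1^iA_2^jB^k$ has size $(C+1)^3$, whereas the triple $(\posproj,\negproj,\mixed)$ determining the image in $Q$ ranges over a set of size $O(C^2)$, since the two projections are linear in $(i,j,k)$ and the mixed part is confined to a fixed finite set. Your version is shorter and avoids the somewhat delicate verification that the paper's explicit solution can be taken ``sufficiently large'' to force the mixed parts to agree; on the other hand, the paper's explicit-equation style is what carries over (with more work) to the conjugate-roots case in the next proposition, where no uniform bound on $|\mixed|$ is available and a pure counting argument would not suffice.
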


\begin{proof}
 Towards a contradiction, suppose there are distinct letters $a$ and
  $c$ in $\Gamma$ with $(a,b),(b,c)\in I$. Let
  \[
     u=\nf(\eta([ab]_I))\,,\ v=\nf(\eta(b))\,,
    \text{ and }w=\nf(\eta([bc]_I))\,.
  \]
  Since $(a,b)\in I$, we have $ab\equiv_I ba$ and therefore
  $\eta([ab]_I)=\eta([ba]_I)$. This implies
  $\posproj(\eta(a))\,\posproj(\eta(b))=\posproj(\eta(b))\,\posproj(\eta(a))$,
  i.e., the two words $\posproj(\eta(a))$ and $\posproj(\eta(b))$
  commute in the free monoid. Since $\posproj(\beta(b))\in p^+$ and
  $p$ is primitive, this implies $\posproj(\eta(a))\in p^*$ and
  therefore $\posproj(u)=\posproj(\eta(a))\,\posproj(\eta(b))\in
  p^+$. Similarly, $\negproj(u)\in q^+$ as well as
  $\posproj(w)\in p^+$ and $\negproj(w)\in q^+$. Hence there are
  positive natural numbers $a_u,a_v,a_w,b_u,b_v,b_w$ such that the
  following hold:
  \begin{align*}
    \posproj(u)&= p^{a_u} & \posproj(v)&= p^{a_v} &  \posproj(w)&= p^{a_w} \\
    \negproj(u)&= q^{b_u} & \negproj(v)&= q^{b_v}
      & \negproj(w)&= q^{b_w}
  \end{align*}
 
First we prove that there exist vectors  $\vec{x}=(x_u,x_v,x_w)\in \bN^3$ and $\vec{y}=(y_u,y_v,y_w)\in
  \bN^3$ with $\vec{x}\neq\vec{y}$ such that
 \begin{equation}
    \label{eq: nontrivial equation - nonconjugated case}
    u^{x_u}v^{x_v} w^{x_w} \equiv u^{y_u}v^{y_v} w^{y_w}\,.
  \end{equation}

  Consider the following system of linear equations:
  \begin{equation}
    \label{eq:linear-system - nonconjugated case}
    \left.
    \begin{array}{rcl}
      a_ux_u + a_vx_v + a_wx_w & = & a_uy_u + a_vy_v + a_wy_w\\
      b_ux_u + b_vx_v + b_wx_w & = & b_uy_u + b_vy_v + b_wy_w
    \end{array}
    \right\}
  \end{equation}
  Using Gaussian elimination, we find a nontrivial rational
  solution. Hence the system \eqref{eq:linear-system - nonconjugated
    case} has an integer solution. Increasing all entries in the
  integer solution by some fixed number $n\in\bN$ yields another
  solution. Hence we can choose $n$ large enough such that the
  resulting solution $\vec{x}=(x_u,x_v,x_w)$ and
  $\vec{y}=(y_u,y_v,y_w)$ satisfies
  \begin{itemize}
  \item $\vec{x},\vec{y}\in\bN^3$
  \item $|p|+|q|\le b_w\cdot x_w\cdot |q|$ and
    $|p|+|q|\le b_w\cdot y_w\cdot |q|$, and
  \item $|p|+|q|\le (a_u\cdot x_u+a_v\cdot x_v)\cdot|p|$ and
    $|p|+|q|\le (a_u\cdot y_u+a_v\cdot y_v)\cdot|p|$.
  \end{itemize}

  Now we show that $\vec{x}$ and $\vec{y}$ is a solution to the
  Equation~\eqref{eq: nontrivial equation - nonconjugated case}.

  First, we have
  \begin{align*}
    \posproj(u^{x_u} v^{x_v} w^{x_w})
      &= (p^{a_u})^{x_u} (p^{a_v})^{x_v} (p^{a_w})^{x_w}\\
      &= p^{a_u x_u + a_v x_v + a_w x_w}\\
      &= p^{a_u y_u + a_v y_v + a_w y_w}\\
      &= \posproj(u^{y_u} v^{y_v} w^{y_w})
   \intertext{and similarly}
    \negproj(u^{x_u} v^{x_v} w^{x_w})
      &= \negproj(u^{y_u} v^{y_v} w^{y_w}) \,.
  \end{align*}
 
  It remains to be shown that $\mixed(u^{x_u} v^{x_v} w^{x_w})$ equals
  $\mixed(u^{y_u} v^{y_v} w^{y_w})$. Let $H$ denote the set of words
  that are both, a suffix of $q^m$ and a prefix of $p^n$ for some
  $m,n\in\bN$. First note that $\mixed(u)$ belongs to $H$ since it is
  a suffix of $\negproj(u)=q^{b_u}$ and a prefix of
  $\posproj(u)=p^{a_u}$. By Lemma~\ref{L:power}, 
  \[
    \mixed(u^{x_u}=\ol(\mixed(u) q^{b_u(x_u-1)}, p^{a_u(x_u-1)} \mixed(u))
  \]
  is a suffix of $\mixed(u) q^{b_u(x_u-1)}$ which is a suffix of $q^m$
  for some $m\in\bN$ since $u\in H$. Symmetrically, $\mixed(u^{x_u})$
  is a prefix of $p^{a_u(x_u-1)}\mixed(u)$ and therefore a prefix of
  $p^m$ for some $m\in\bN$ since $\mixed(u)\in H$. Hence we get
  $\mixed(u^{x_u})\in H$. Using the analogous arguments, it follows
  that
  \[
     \mixed(u^{x_u} v^{x_v})
       =\ol(\mixed(u^{x_u})\,q^{b_vx_v},   p^{a_ux_u}\,\mixed(v^{x_v}))\,.
  \]
  belongs to $H$. Finally, also
  \[
    \mixed(u^{x_u} v^{x_v} w^{x_w})
      = \ol(\mixed(u^{x_u} v^{x_v})q^{b_wx_w},p^{a_ux_u+a_vx_v} \mixed(w^{x_w}))
  \]
  is an element of $H$ by analogous arguments. For our following
  argument, it is important to note that
  $\mixed(u^{x_u} v^{x_v} w^{x_w})$ is a factor of $q^m$ and of $p^m$
  for some $m\in\bN$. Since $p\not\sim q$, \cite[Lemma7,
  p.282]{LohM11} implies
  $|\mixed(u^{x_u} v^{x_v} w^{x_w})|\le|p|+|q|$.  Furthermore, we have
  $|p|+|q|\le b_w\cdot x_w\cdot|q|=|q^{b_w x_w}|$ and
  $|p|+|q|\le (a_ux_u+a_vx_v)\cdot|p|=|p^{a_ux_u+a_vx_v}|$ and
  therefore $|\mixed(u^{x_u} v^{x_v} w^{x_w})|\le |q^{b_w x_w}|$ and
  $|\mixed(u^{x_u} v^{x_v} w^{x_w})|\le |p^{a_ux_u+a_vx_v}|$.
  Consequently,
  \begin{align*}
    \mixed(u^{x_u} v^{x_v} w^{x_w})
      &=\ol(\mixed(u^{x_u} v^{x_v})q^{b_wx_w},
            p^{a_ux_u+a_vx_v} \mixed(w^{x_w}))\\
      &=\ol(q^{b_wx_w},p^{a_ux_u+a_vx_v})\\
      &=\ol(q^{b_ux_u+b_vx_v+b_wx_w},p^{a_ux_u+a_vx_v+a_wx_w})\\
      &=\ol(q^{b_uy_u+b_vy_v+b_wy_w},p^{a_uy_u+a_vy_v+a_wy_w})\,.
  \end{align*}
  By symmetric arguments, this last overlap equals
  $\mixed(u^{y_u} v^{y_v} w^{y_w})$. Thus, indeed,
  \[
    \mixed(u^{x_u} v^{x_v} w^{x_w})=\mixed(u^{y_u} v^{y_v} w^{y_w})\,.
  \]

  Hence the two words $u^{x_u} v^{x_v} w^{x_w}$ and
  $u^{y_u} v^{y_v} w^{y_w}$ agree in their projections and their
  normal forms agree in their mixed part. Consequently, the normal
  forms of these two words coincide. Hence they are equivalent, i.e.,
  as required, we found a non-trivial solution $\vec{x}$, $\vec{y}$ of
  Equation~\eqref{eq: nontrivial equation - nonconjugated case}.

  Finally we obtain
  \begin{align*}
     \eta([(ab)^{x_u} b^{x_v} (bc)^{x_w}]_I)
     &=[u^{x_u} v^{x_v} w^{x_w}]\\
     &=[u^{y_u} v^{y_v} w^{y_w}]\\
     &=\eta([(ab)^{y_u} b^{y_v} (bc)^{y_w}]_I)\,.
  \end{align*}
  Since $\eta$ is injective, and since $(a,b),(b,c)\in I$, this
  implies
  \begin{align*}
    a^{x_u} b^{x_u+x_v+x_w} c^{x_w}
     &\equiv_I (ab)^{x_u} b^{x_v} (bc)^{x_w}\\
     &\equiv_I (ab)^{y_u} b^{y_v} (bc)^{y_w}\\
     &\equiv_I a^{y_u} b^{y_u+y_v+y_w} c^{y_w}\,.
  \end{align*}
  Since the letters $a$, $b$, and $c$ are mutually distinct, we obtain
  \[
    (x_u,x_u+x_v+x_w,x_w)= (y_u,y_u+y_v+y_w,y_w)
  \]
  and therefore $\vec{x}=\vec{y}$. But this contradicts our choice of
  these two vectors as distinct.  Thus there are no two distinct
  letters $a$ and $c$ with $(a,b),(b,c)\in I$.\qed
\end{proof}

Note that the above proof, essentially, proceeded as follows: we aimed
at a nontrivial solution to Equation \eqref{eq: nontrivial equation -
  nonconjugated case} in natural numbers. Length conditions on the
positive and negative projections yielded the system of linear
equations \eqref{eq:linear-system - nonconjugated case}. Since this
system consists of two equations in the unknown $x_u-y_u$, $x_v-y_v$
and $x_w-y_w$, it has an integer solution that can be increased by
arbitrary natural numbers, i.e., there is a ``sufficiently large''
solution that makes the positive (and negative) projections of
$u^{x_u} v^{x_v} w^{x_w}$ and $u^{y_u} v^{y_v} w^{y_w}$ equal. Using
that this solution is ``sufficiently large'' and that $p$ and $q$ are
not conjugated, we employed some combinatorics on words to prove that
also the mixed parts of the normal forms of these two words were
equal.

\subsubsection{Conjugated roots}

We now want to prove a similar result in case $p$ and $q$ are
conjugated. The proof, although technically more involved, will
proceed similarly, i.e., we will determine a non-trivial solution of
Equation~\eqref{eq: nontrivial equation - nonconjugated case}.  But
presentationwise, we will proceed differently: First,
Lemma~\ref{L:product_of_powers} describes the mixed part of the normal
form of $u^{x_u} v^{x_v} w^{x_w}$. Then,
Lemma~\ref{L:nontrivial_equation} determines a nontrival solution to
(some rotation of) Equation~\eqref{eq: nontrivial equation -
  nonconjugated case}, before, finally, Proposition~\ref{P: conjugated
  roots} proves the analogous to Proposition~\ref{P: nonconjugated
  roots} for conjugated roots.

We first prove a combinatorial lemma on words that are prefix of some
power of $p$ and, at the same time, suffixes of some power of $q$
(where $p$ and $q$ are conjugated).

\begin{lemma}\label{L:conj}
  Let $g \in A^*$, $h \in A^+$ such that $p=gh$ and $q=hg$ are both
  primitive words. Let furthermore $y$ be some suffix of $q^i$ and
  some prefix of $p^j$ for some $i,j \geq 1$ such that $|y| \geq
  |q|$. Then $y=g q^k=p^k g$ where
  $k=\left\lfloor\frac{|y|}{|q|}\right\rfloor$.
\end{lemma}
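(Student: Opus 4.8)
The plan is to reduce the statement to a single length‑congruence and then settle that by combinatorics on words. Write $n=|p|=|q|$ (note $|p|=|g|+|h|=|q|$) and record the conjugacy identity $p^{t}g=gq^{t}$ for all $t\ge 0$, which follows by an immediate induction from $pg=ghg=gq$. In particular $p^{k}g=gq^{k}$, so the two displayed descriptions of $y$ in the lemma are the same word; it therefore suffices to prove $y=p^{k}g$. Next observe that, since $|y|\ge|q|=n$, the number $k=\lfloor |y|/|q|\rfloor$ satisfies $k\ge 1$; write $|y|=kn+r$ with $0\le r<n$. Because $y$ is a prefix of a power of $p$ and $|y|=kn+r<(k+1)n$, it equals $p^{k}p'$, where $p'$ is the length‑$r$ prefix of $p$; symmetrically, $y=q''q^{k}$, where $q''$ is the length‑$r$ suffix of $q$. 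Since $|g|<n$ and $p=gh$, $q=hg$, the length‑$|g|$ prefix of $p$ and the length‑$|g|$ suffix of $q$ are both $g$; hence the whole lemma follows once we show $r=|g|$, i.e. $|y|\equiv|g|\pmod n$.

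To prove this congruence, the key trick is to make $y$ a border of one fixed word. Set $W=p^{k+1}g=gq^{k+1}$ (the two expressions agree by the conjugacy identity), so $|W|=(k+1)n+|g|$. Since $y$ is a prefix of some $p^{j}$ — where we may assume $j\ge k+1$ — and $|y|<(k+1)n$, the word $y$ is a prefix of $p^{k+1}$, hence of $W$; likewise $y$ is a suffix of some $q^{i}$ with $i\ge k+1$, hence a suffix of $q^{k+1}$ and so of $gq^{k+1}=W$. As $|y|<|W|$, $y$ is a proper border of $W$, so $d:=|W|-|y|$ is a period of $W$ with $1\le d\le kn+|g|$ (the upper bound using $|y|\ge n$).

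It then remains to exploit the periods of $W$. First, $W$ is a prefix of $p^{k+2}=p^{k+1}gh$, so $W$ has period $n$; moreover $n$ is the shortest period of $W$, since any period $e<n$ of $W$ would be a period of its prefix $p^{k+1}$, and as $|p^{k+1}|=(k+1)n\ge 2n>e+n-\gcd(e,n)$, the Fine--Wilf theorem would make $\gcd(e,n)$ a period of $p^{k+1}$ and hence of its length‑$n$ prefix $p$, contradicting primitivity of $p$. Now apply Fine--Wilf to the periods $d$ and $n$ of $W$: the length condition $|W|=(k+1)n+|g|\ge (kn+|g|)+n\ge d+n-\gcd(d,n)$ holds, so $\gcd(d,n)$ is a period of $W$; hence $\gcd(d,n)\ge n$, so $n\mid d$ and $|y|=|W|-d\equiv|W|\equiv|g|\pmod n$. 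Combined with $0\le r<n$ and $0\le|g|<n$ this gives $r=|g|$, so $p'=g=q''$ and $y=p^{k}g=gq^{k}$.

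The main obstacle is the combinatorics‑on‑words core of the third paragraph: recognizing that prepending $g$ turns $y$ into a border of $W$, and then verifying the two Fine--Wilf length conditions (in particular that $W$ has no period shorter than $n$). Everything else — the conjugacy identity and reading off $g$ as the relevant prefix/suffix of $p$ and $q$ — is routine bookkeeping, and the degenerate case $g=\varepsilon$ (where $p=q$ and $W=p^{k+1}$) is covered without change.
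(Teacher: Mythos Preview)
Your proof is correct, but it takes a genuinely different route from the paper. The paper argues directly: since $y$ is a suffix of $q^i$ one writes $y=sq^k$ with $q=rs$ and $r\neq\varepsilon$, and since $y$ is a prefix of $p^j$ one writes $y=p^ks'$ with $p=s't$ and $t\neq\varepsilon$; comparing lengths gives $|s|=|s'|$, and then $s(rs)^k=y=s'(ts')^k$ together with $k\ge 1$ yields $s=s'$ and $r=t$, so $p=sr$ and $q=rs$. The paper then invokes the classical fact (Lothaire, Prop.~1.3.3) that a primitive word has exactly $|p|$ distinct rotations, so the decomposition $p=gh$, $q=hg$ with $h\neq\varepsilon$ is unique, forcing $(s,r)=(g,h)$. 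Your argument instead builds the word $W=p^{k+1}g=gq^{k+1}$, observes that $y$ is a proper border of $W$, and uses the Fine--Wilf theorem twice: once to show that $W$ has no period shorter than $n=|p|$, and once to conclude that the period $|W|-|y|$ induced by the border $y$ is a multiple of $n$, whence $|y|\equiv|g|\pmod n$. Both proofs are standard combinatorics on words of comparable length; the paper's is slightly more direct but relies on an external citation, whereas yours is self-contained modulo Fine--Wilf and nicely isolates the periodicity content of the statement.
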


\begin{proof}
  Since $y$ is a suffix of $q^i$, there exist words $r\in A^+$ and
  $s\in A^*$ with $y=sq^k$ and $q=rs$.

  Since $p$ and $q$ are conjugate, their lengths are equal. Hence
  $k=\left\lfloor\frac{|y|}{|p|}\right\rfloor$. Since $y$ is a prefix
  of $p^j$, there exist words $s'\in A^*$ and $t\in A^+$ with
  $y=p^k s'$ and $p=s't$.

  Since $|p|=|q|$, $sq^k=y=p^k s'$ implies $|s|=|s'|$. Together with
  $s(rs)^k=sq^k=y = p^k s' = (s't)^k s'=s'(ts')^k$, this implies
  $s=s'$. Since $k>0$, we also get $r=t$. Hence we obtained $q=rs$ and
  $p=s't=sr$. Since $p$ and $q$ are conjugate primitive words and
  $r\in A^+$, \cite[Proposition 1.3.3, p.~8]{Lot83} implies
  $(g,h)=(s,r)$. This ensures in particular $g=s$ and therefore
  $y=g q^k=p^k g$.\qed
\end{proof}

Using this combinatorial lemma, we can often determine the overlap of
two words via the following corollary:

\begin{corollary}\label{C: fact}
  Let $g \in A^*$, $h \in A^+$ such that $p=gh$ and $q=hg$ are both
  primitive words. Furthermore, let $p'$ be a suffix of $p$ with
  $|p'|<|p|$ and let $q'$ be a prefix of $q$ with $|q'|<|q|$.

  Then for every $i,j\in \bN$ we have $\ol(p'g q^i,p^j g q')=g
  q^{min(i,j)}$.
\end{corollary}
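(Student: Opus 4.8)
The plan is to reduce the claim to Lemma~\ref{L:conj} by showing that the overlap $\ol(p'gq^i,p^jgq')$ has length exactly $|g|+\min(i,j)\cdot|q|$ and then applying that lemma to identify the overlap word explicitly. First I would verify that $gq^{\min(i,j)}$ is indeed a common suffix/prefix: it is a suffix of $p'gq^i$ because $q^{\min(i,j)}$ is a suffix of $q^i$ and, prepending $g$, the word $gq^{\min(i,j)}$ is a suffix of $p^kg\cdot q^i$-type strings — more carefully, since $p=gh$ gives $pg=ghg=g\cdot hg=gq$, one has $p^{\,\ell}g=g q^{\,\ell}$ for all $\ell\ge 0$, so $gq^{\min(i,j)}=p^{\min(i,j)}g$ is visibly both a suffix of $p'gq^i$ (as $p'$ is a suffix of $p$ and $gq^i$ ends in $gq^{\min(i,j)}$) and a prefix of $p^jgq'$ (as $p^jg=gq^j$ starts with $p^{\min(i,j)}g=gq^{\min(i,j)}$). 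Hence $|\ol(p'gq^i,p^jgq')|\ge|g|+\min(i,j)|q|$.

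For the upper bound I would argue by contradiction: suppose the overlap $y$ is strictly longer than $gq^{\min(i,j)}$. Then $|y|\ge|g|+(\min(i,j)+1)|q|$ is impossible for length reasons on the shorter side (the side with exponent $\min(i,j)$), so actually $|y|$ lies strictly between $|g|+\min(i,j)|q|$ and $|g|+(\min(i,j)+1)|q|$, which forces $\lfloor|y|/|q|\rfloor=\min(i,j)$ once $|y|\ge|q|$ (and the case $|y|<|q|$, i.e. $\min(i,j)=0$ and $|y|<|q|$, needs a short separate check). Since $y$ is a suffix of $p'gq^i$, which is itself a suffix of $q^{i+1}$ (because $p'g$ is a suffix of $pg=gq$, hence of $q\cdot q=q^2$ when $|p'g|\le 2|q|$ — here $|p'|<|p|=|q|$ and $|g|<|q|$ so $|p'g|<2|q|$), we get that $y$ is a suffix of some power of $q$. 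Symmetrically $y$ is a prefix of $p^jgq'$, which is a prefix of $p^{j+1}$ (as $gq'$ is a prefix of $gq=pg$, hence of $p^2$), so $y$ is a prefix of some power of $p$. With $|y|\ge|q|$ this is exactly the hypothesis of Lemma~\ref{L:conj}, yielding $y=gq^k=p^kg$ with $k=\lfloor|y|/|q|\rfloor=\min(i,j)$, contradicting $|y|>|g|+\min(i,j)|q|$.

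Combining the two bounds gives $|\ol(p'gq^i,p^jgq')|=|g|+\min(i,j)|q|$, and since $gq^{\min(i,j)}$ is a common suffix/prefix of that exact length and the overlap is the unique such word of maximal length, we conclude $\ol(p'gq^i,p^jgq')=gq^{\min(i,j)}$. The main obstacle I anticipate is the careful bookkeeping of lengths to ensure that $p'g$ really is a suffix of $q^2$ and $gq'$ really is a prefix of $p^2$ (so that $y$ embeds into powers of $q$ and of $p$ respectively) and the handling of the degenerate cases $\min(i,j)=0$ or $g=\varepsilon$, where Lemma~\ref{L:conj} does not directly apply and the overlap must be shown to be $g$ (respectively $\varepsilon$) by a direct primitivity argument on $p$ and $q$.
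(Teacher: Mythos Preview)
Your approach is essentially the paper's: embed $y=\ol(p'gq^i,p^jgq')$ into powers of $q$ and of $p$, invoke Lemma~\ref{L:conj} to get $y=gq^k=p^kg$, and then pin down $k=\min(i,j)$ by length bounds together with the observation that $gq^{\min(i,j)}=p^{\min(i,j)}g$ is a common suffix/prefix. The paper does this directly rather than by contradiction, and your intermediate claim that the strict inequalities force $\lfloor|y|/|q|\rfloor=\min(i,j)$ is not quite right as stated (the floor could a~priori be $\min(i,j)+1$), but once Lemma~\ref{L:conj} gives $|y|=|g|+k|q|$ the bounds $|g|+\min(i,j)|q|\le|y|<|g|+(\min(i,j)+1)|q|$ immediately yield $k=\min(i,j)$, so the argument goes through.
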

\begin{proof}
  Let $y=\ol(p' g q^i,p^jgq')$. Since $p'$ is a suffix of $p=gh$, the
  word $p' g q^i$ is a suffix of $g h g q^i= gq^{i+1}$ and therefore
  of $q^{i+2}$. Hence also $y$ is a suffix of $q^{i+2}$. Similarly,
  $y$ is a prefix of $p^{j+2}$. By Lemma~\ref{L:conj}, we obtain $y=g
  q^k=p^k g$ for some $k\in\bN$ and it remains to be shown that
  $k=\min(i,j)$.

  Note that
  \begin{align*}
    k|q|+|g| &= |y|\\
             &\le |p' g q^i| & \text{ since $y$ is a suffix of }p' g q^i\\
             &< (i+1)|q|+|g| & \text{ since }|p'|<|p|=|q|\,.
  \end{align*}
  This implies $k\le i$ and, similarly, we can show $k\le j$, i.e.,
  $k\le \min(i,j)$. On the other hand note that $g
  q^{\min(i,j)}=p^{\min(i,j)} g$ is a suffix of $p' g q^i$ and a
  prefix of $p^j g q'$ implying $k\ge\min(i,j)$ since $gq^k=\ol(p' g
  q^i,p^j g q')$. Hence $k=\min(i,j)$.\qed
\end{proof}

\begin{lemma}\label{L:product_of_powers}
  Let $g\in A^*$, $h\in A^+$ such that $p=gh$ and $q=hg$ are
  primitive. Let $u,v,w\in Q$ such that the following holds for some
  $a_u,a_v,a_w,b_u,b_v,b_w\in\bN\setminus\{0\}$, and
  $c_u,c_v,c_w\in\mathbb{Z}$:
  \begin{align*}
    \posproj(u)&= p^{a_u} & \negproj(u)&= q^{b_u} &
         c_u&= \begin{cases}
                -1 & \text{ if }|\mixed(u)|<|g|\\
                \left\lfloor{\frac {|\mixed(u)|} {|q|} }\right\rfloor
                   &\text{ otherwise}
               \end{cases}\\
    \posproj(v)&= p^{a_v} & \negproj(v)&= q^{b_v} &
         c_v&= \begin{cases}
                -1 & \text{ if }|\mixed(v)|<|g|\\
                \left\lfloor{\frac {|\mixed(v)|} {|q|} }\right\rfloor
                   &\text{ otherwise}
               \end{cases}\\
    \posproj(w)&= p^{a_w} & \negproj(w)&= q^{b_w} &
         c_w&= \begin{cases}
                -1 & \text{ if }|\mixed(w)|<|g|\\
                \left\lfloor{\frac {|\mixed(w)|} {|q|} }\right\rfloor
                   &\text{ otherwise}
               \end{cases}
  \end{align*}

  Let $\vec{x}=(x_u,x_v,x_w)\in \bN^3$ with $x_u,x_v,x_w\ge2$. Then
  $\mixed(u^{x_u} v^{x_v} w^{x_w})=gq^{X_{\vec{x}}}=p^{X_{\vec{x}}}g$
  where
  \[
  X_{\vec{x}}=\min
  \left( \begin{array}{rrrr}
    \min(a_u, b_u) x_u+ & b_v x_v+ & b_w  x_w + & c_u-\min(a_u, b_u),\\
    a_u  x_u+ & \min(a_v, b_v) x_v+ & b_w  x_w + & c_v-\min(a_v, b_v), \\
    a_u  x_u+ & a_v  x_v + & \min(a_w, b_w) x_w + & c_w-\min(a_w, b_w)
  \end{array}
   \right)\,.
  \]
\end{lemma}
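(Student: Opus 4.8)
\emph{Overall strategy.} The plan is to compute $\mixed(u^{x_u}v^{x_v}w^{x_w})$ by two nested applications of the multiplication formula (Theorem~\ref{T:multiplication}), each time arranging the two arguments of the overlap into the shape $p'gq^i$ and $p^jgq'$ that is resolved by Corollary~\ref{C: fact}. The algebraic fact used throughout is the telescoping identity $gq^k=g(hg)^k=(gh)^kg=p^kg$, valid for every $k\ge0$, which lets me pass freely between the forms $gq^{\bullet}$ and $p^{\bullet}g$. Note first that $\mixed(t)$ is, by definition of the normal form, simultaneously a suffix of $\negproj(t)=q^{b_t}$ and a prefix of $\posproj(t)=p^{a_t}$ for each $t\in\{u,v,w\}$.

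\emph{The crucial step: a power formula.} First I would prove that for each $t\in\{u,v,w\}$ and every $x_t\ge2$,
\[
  \mixed(t^{x_t})=gq^{C_t}=p^{C_t}g,\qquad C_t:=\min(a_t,b_t)\,(x_t-1)+c_t .
\]
By Lemma~\ref{L:power}, $\mixed(t^{x_t})=\ol\bigl(\mixed(t)\,q^{b_t(x_t-1)},\;p^{a_t(x_t-1)}\,\mixed(t)\bigr)$, and the claim follows from Corollary~\ref{C: fact} once I show that the first argument equals $p'_t\,g\,q^{c_t+b_t(x_t-1)}$ for a proper suffix $p'_t$ of $p$ and the second equals $p^{c_t+a_t(x_t-1)}\,g\,q'_t$ for a proper prefix $q'_t$ of $q$, since then $\mixed(t^{x_t})=gq^{\min(c_t+b_t(x_t-1),\,c_t+a_t(x_t-1))}=gq^{C_t}$. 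To establish this I distinguish three cases for $\mixed(t)$. If $|\mixed(t)|\ge|q|$, then Lemma~\ref{L:conj} gives $\mixed(t)=gq^{c_t}=p^{c_t}g$ and I take $p'_t=q'_t=\varepsilon$. If $|g|\le|\mixed(t)|<|q|$ (so $c_t=0$), then $\mixed(t)$ is a suffix of $q=hg$, hence $\mixed(t)=h_2g$ with $h_2$ a proper suffix of $h$ (hence of $p$), and also a prefix of $p=gh$, hence $\mixed(t)=gh_1$ with $h_1$ a proper prefix of $h$ (hence of $q$); using the suffix form in the first argument and the prefix form in the second yields the two required shapes with exponents $b_t(x_t-1)$ and $a_t(x_t-1)$. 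If $|\mixed(t)|<|g|$ (so $c_t=-1$), then $\mixed(t)$ is a proper suffix $g''$ of $g$ and a proper prefix $g'$ of $g$; factoring one copy of $hg$ (resp.\ $gh$) out of the adjacent power and regrouping via the telescoping identity gives $p'_t=g''h$ (a proper suffix of $p$), $q'_t=hg'$ (a proper prefix of $q$), with the exponents reduced by one, consistently with $c_t=-1$. In all cases the exponents $c_t+b_t(x_t-1)$, $c_t+a_t(x_t-1)$ are $\ge0$ because $x_t\ge2$, $a_t,b_t\ge1$ and $c_t\ge-1$, so Corollary~\ref{C: fact} applies.

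\emph{Assembling the product.} With the power formula in hand and using that $\posproj,\negproj$ are homomorphisms with $\posproj(t^n)=p^{a_tn}$, $\negproj(t^n)=q^{b_tn}$, I would apply Theorem~\ref{T:multiplication} twice:
\[
  \mixed(u^{x_u}v^{x_v})
   =\ol\bigl(gq^{C_u}q^{b_vx_v},\;p^{a_ux_u}p^{C_v}g\bigr)
   =gq^{D},\qquad D:=\min\bigl(C_u+b_vx_v,\ a_ux_u+C_v\bigr),
\]
by Corollary~\ref{C: fact} with $p'=q'=\varepsilon$, and then, since $\posproj(u^{x_u}v^{x_v})=p^{a_ux_u+a_vx_v}$,
\[
  \mixed(u^{x_u}v^{x_v}w^{x_w})
   =\ol\bigl(gq^{D}q^{b_wx_w},\;p^{a_ux_u+a_vx_v}p^{C_w}g\bigr)
   =gq^{\min(D+b_wx_w,\ a_ux_u+a_vx_v+C_w)}
\]
again by Corollary~\ref{C: fact}. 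Finally I would unfold the minima: substituting $C_t=\min(a_t,b_t)(x_t-1)+c_t$ shows that $C_u+b_vx_v+b_wx_w$ and $a_ux_u+C_v+b_wx_w$ are precisely the first two rows of $X_{\vec{x}}$, while $a_ux_u+a_vx_v+C_w$ is the third row, so the exponent above equals $X_{\vec{x}}$ and $\mixed(u^{x_u}v^{x_v}w^{x_w})=gq^{X_{\vec{x}}}=p^{X_{\vec{x}}}g$.

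\emph{Main obstacle.} The hard part is the power formula, and specifically the two cases $|\mixed(t)|<|q|$: here $\mixed(t)$ is not of the clean form $gq^{c_t}$, so one must recognize it separately as a short suffix of $q=hg$ and a short prefix of $p=gh$ and carefully track how the surrounding exponent shifts (by $0$, resp.\ $-1$) when $\mixed(t)$ is absorbed into the adjacent powers, so that the output still matches the case distinction defining $c_t$ and the hypotheses of Corollary~\ref{C: fact}. Everything after Step~1 is routine bookkeeping with the telescoping identity.
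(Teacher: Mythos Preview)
Your proposal is correct and follows essentially the same approach as the paper: compute $\mixed(t^{x_t})$ via Lemma~\ref{L:power} under the same three-case analysis on $|\mixed(t)|$ (yielding the exponent $\min(a_t,b_t)(x_t-1)+c_t$, which the paper calls $e_t$), then apply Theorem~\ref{T:multiplication} twice with Corollary~\ref{C: fact} to assemble the product and unfold the nested minima into $X_{\vec{x}}$. Your unified packaging of the three cases via the words $p'_t,q'_t$ is slightly tidier than the paper's separate computations, but the argument is the same.
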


\begin{proof}
  From Lemma~\ref{L:power}, we get
  \begin{align*}
    \mixed(u^{x_u})&= \ol(c(u)\negproj(u)^{x_u-1},\posproj(u)^{x_u-1}\mixed(u))\,.
  \end{align*}
  Depending on the length of $\mixed(u)$, we distinguish three cases:
  First, let $|\mixed(u)|<|g|$. Since $\mixed(u)$ is a suffix of
  $\negproj(u)\in q^*=(hg)^*$, the word $\mixed(u)$ is a suffix of
  $g$. Similarly, $\mixed(u)$ is a prefix of $\posproj(u)\in
  p^*=(gh)^*$ implying that $\mixed(u)$ is a prefix of $g$. Then
  $a_u,b_u>0$ and $x_u\ge2$ imply $b_u(x_u-1),a_u(x_u-1)>0$. Hence we
  can determine $\mixed(u)$ as follows:
  \begin{align*}
    \mixed(u^{x_u}) &=\ol(\mixed(u) q^{b_u(x_u-1)}, p^{a_u(x_u-1)} \mixed(u))\\
      &=\ol(\mixed(u) hg q^{b_u(x_u-1)-1}, p^{a_u(x_u-1)-1}gh \mixed(u))\\
      &=gq^{\min(b_u(x_u-1)-1,a_u(x_u-1)-1)}
        &\text{by Corollary~\ref{C: fact}}\\
      &=gq^{\min(a_u,b_u)\cdot(x_u-1)+c_u}
        &\text{since $c_u=-1$}
  \end{align*}
  Next, consider the case $|g|\le|\mixed(u)|<|q|$. Then $\mixed(u)$ is
  a prefix of $p=gh$ and a suffix of $q=hg$. Hence there are a prefix
  $h'$ and a suffix $h''$ of $h$ with $\mixed(u)=gh'=h''g$. Now we can
  determine $\mixed(u^{x_u}$ as follows:
  \begin{align*}
    \mixed(u^{x_u}) &=\ol(\mixed(u) q^{b_u(x_u-1)}, p^{a_u(x_u-1)}\mixed(u))\\
      &=\ol(h'g q^{b_u(x_u-1)}, p^{a_u(x_u-1)}gh'')\\
      &=gq^{\min(b_u(x_u-1),a_u(x_u-1))}
        &\text{by Corollary~\ref{C: fact}}\\
      &=gq^{\min(a_u,b_u)\cdot(x_u-1)+c_u}
        &\text{since $c_u=0$}
  \end{align*}
  Finally, let $|q|\le|\mixed(u)|$. Then
  $c_2=\left\lfloor\frac{|\mixed(u)|}{|q|}\right\rfloor$. Furthermore,
  $\mixed(u)$ is a prefix of $\posproj(u)\in p^*$ and a suffix of
  $\negproj(u)\in q^*$. Hence, by Lemma~\ref{L:conj},
  $\mixed(u)=gq^{c_u}=p^{c_u}g$. Hence we can determine $\mixed(u^{x_u})$ as
  follows:
  \begin{align*}
    \mixed(u^{x_u}) &=\ol(\mixed(u) q^{b_u(x_u-1)}, p^{a_u(x_u-1)}\mixed(u))\\
      &=\ol(g q^{c_u+b_u(x_u-1)}, p^{a_u(x_u-1)+c_u}g)\\
      &=gq^{\min(c_u+b_u(x_u-1),a_u(x_u-1)+c_u)}
        &\text{by Corollary~\ref{C: fact}}\\
      &=gq^{\min(a_u,b_u)\cdot(x_u-1)+c_u}
  \end{align*}
  In other words, we proved
  \[
    \mixed(u^{x_u})=g q^{e_u}=p^{e_u} g
  \]
  with
  \[
    e_u=\min(a_u,b_u)\cdot(x_u-1)+c_u\,.
  \]
  Clearly, similar statements hold for $\mixed(v^{x_v})$ and
  $\mixed(w^{x_w})$.\bigskip

  In a second step, we determine $\mixed(u^{x_u} v^{x_v})$. We
  get
  \begin{align*}
    \mixed(u^{x_u} v^{x_v})
     &=\ol(\mixed(u^{x_u})\,\negproj(v^{x_v}),\posproj(u^{x_u})\,\mixed(v^{x_v}))\\
      &=\ol(gq^{e_u} q^{b_vx_v},
                p^{a_ux_u} p^{e_v}g)\\
      &=gq^{\min(e_u+b_vx_v,a_ux_u+e_v)}\,.
  \end{align*}
  In other words,
  \[
    \mixed(u^{x_u} v^{x_v})= g q^{e_{uv}} = p^{e_{uv}}g
  \]
  with
  \[
    e_{uv}= \min(e_u+b_vx_v,a_ux_u+e_v) \,.
  \]\bigskip

  In a third and last step, we determine $\mixed(u^{x_u} v^{x_v} w^{x_w})$.
  Note that $\mixed(w^{x_w})=p^{e_w}g$. Then we get
  \begin{align*}
    \mixed(u^{x_u} v^{x_v} w^{x_w})&=
      \ol(\mixed(u^{x_u} v^{x_v}) \negproj(w^{x_w}),
                   \posproj(u^{x_u} v^{x_v}) \mixed(w^{x_w}))\\
      &=\ol(g q^{e_{uv}}q^{b_wx_w},p^{a_ux_u+a_vx_v}\,q^{e_w}g)\\
      &=gq^{\min(e_{uv}+b_wx_w,a_ux_u+a_vx_v+e_w)}\,.
  \end{align*}

  Unraveling the definitions of $e_u$, $e_v$, $e_w$, and $e_{uv}$
  yields
  \begin{align*}
    \min(e_{uv}+b_wx_w,a_ux_u+a_vx_v+e_w)
      &=\min\left( \begin{array}{l}
                     \min(e_u+b_vx_v,a_ux_u+e_v)+b_wx_w,\\
                     a_u  x_u+  a_v  x_v + e_w
                   \end{array}
            \right)\\
      &=\min\left( \begin{array}{l}
                     e_u+b_vx_v+b_wx_w,\\
                     a_ux_u+e_v+b_wx_w,\\
                     a_u  x_u+  a_v  x_v + e_w
                   \end{array}
            \right)\\
      &=X_{\vec{x}}\,.
  \end{align*}

  Hence, we have indeed
  $\mixed(u^{x_u} v^{x_v} w^{x_w})=gq^{X_{\vec{x}}}$.\qed
\end{proof}

\begin{lemma}\label{L:nontrivial_equation}
  Let $g\in A^*$, $h\in A^+$ such that $p=gh$ and $q=hg$ are
  primitive. Let $u',v',w'\in \Sigma^+$ with
  $\posproj(u'),\posproj(v'),\posproj(w')\in p^+$ and
  $\negproj(u'),\negproj(v'),\negproj(w')\in q^+$.

  Then there exist a rotation $(u,v,w)$ of $(u',v',w')$ and vectors
  $\vec{x}=(x_u,x_v,x_w)\in \bN^3$ and $\vec{y}=(y_u,y_v,y_w)\in
  \bN^3$ with $\vec{x}\neq\vec{y}$ such that
  \begin{equation}
    \label{eq: nontrivial equation}
    u^{x_u}v^{x_v} w^{x_w} \equiv u^{y_u}v^{y_v} w^{y_w}\,.
  \end{equation}
\end{lemma}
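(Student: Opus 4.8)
The plan is to reduce the statement to a purely arithmetical one, solve the resulting linear system, and then secure the mixed part by a careful choice of rotation and base point. Fix any rotation $(u,v,w)$ of $(u',v',w')$ and write $\posproj(z)=p^{a_z}$, $\negproj(z)=q^{b_z}$ for $z\in\{u,v,w\}$, all exponents being $\ge1$ by hypothesis; set $\vec a=(a_u,a_v,a_w)$ and $\vec b=(b_u,b_v,b_w)$. Since $\posproj,\negproj$ are homomorphisms, for every $\vec x\in\bN^3$ one has $\posproj(u^{x_u}v^{x_v}w^{x_w})=p^{\vec a\cdot\vec x}$ and $\negproj(u^{x_u}v^{x_v}w^{x_w})=q^{\vec b\cdot\vec x}$, and if all coordinates of $\vec x$ are $\ge2$, then Lemma~\ref{L:product_of_powers} gives $\mixed(u^{x_u}v^{x_v}w^{x_w})=gq^{X_{\vec x}}$, from which $X_{\vec x}$ is recovered since $q\neq\varepsilon$. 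As two words are $\equiv$-equivalent exactly when they have the same positive projection, the same negative projection, and the same mixed part, it therefore suffices to exhibit a rotation $(u,v,w)$ and vectors $\vec x\neq\vec y$ in $\bN^3$, with all coordinates $\ge2$, such that $\vec a\cdot\vec x=\vec a\cdot\vec y$, $\vec b\cdot\vec x=\vec b\cdot\vec y$ and $X_{\vec x}=X_{\vec y}$.

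Writing $\vec d=\vec x-\vec y$, the first two conditions become $\vec a\cdot\vec d=\vec b\cdot\vec d=0$, a homogeneous system of two linear equations in three unknowns, which therefore has a nonzero integer solution $\vec d$. The base point $\vec y$ then remains free, so I will take it with coordinates large enough that $\vec y$ and $\vec x=\vec y+\vec d$ lie in $\bN^3$ with all coordinates $\ge2$; in particular $\vec x\neq\vec y$. What is left is the condition $X_{\vec x}=X_{\vec y}$. Let $E_1,E_2,E_3$ be the three affine forms in $\vec x$ whose minimum is $X_{\vec x}$ (the three rows of the minimum in Lemma~\ref{L:product_of_powers}), with homogeneous parts $E_j^{\mathrm{lin}}$. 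The decisive point is that, along the solution lattice of the system, some of these forms are fixed by the shift by $\vec d$: if the word in position $u$ has $b_u\le a_u$, then $E_1^{\mathrm{lin}}(\vec d)=b_ud_u+b_vd_v+b_wd_w=\vec b\cdot\vec d=0$; dually, if the word in position $w$ has $a_w\le b_w$, then $E_3^{\mathrm{lin}}(\vec d)=a_ud_u+a_vd_v+a_wd_w=\vec a\cdot\vec d=0$; and if $a_z=b_z$ for all three words, then $E_1^{\mathrm{lin}}(\vec d)=E_2^{\mathrm{lin}}(\vec d)=E_3^{\mathrm{lin}}(\vec d)=0$.

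This suggests a case distinction. If $a_z=b_z$ for every word, then $E_j(\vec x)=E_j(\vec y)$ for $j=1,2,3$, hence $X_{\vec x}=X_{\vec y}$ whatever the rotation and whatever the (large) $\vec y$. Otherwise some word has $a_z\neq b_z$. If some word satisfies $a_z>b_z$, rotate it into position $u$, so $E_1^{\mathrm{lin}}(\vec d)=0$; a short computation shows that $E_2(\vec y)-E_1(\vec y)$ and $E_3(\vec y)-E_1(\vec y)$ are affine in $\vec y$ with coefficient $a_u-b_u\ge1$ on $y_u$, so after fixing $y_v,y_w$ (large) and letting $y_u$ be huge, both differences exceed $|E_2^{\mathrm{lin}}(\vec d)|$ and $|E_3^{\mathrm{lin}}(\vec d)|$, respectively. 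Then $X_{\vec y}=E_1(\vec y)=E_1(\vec x)$, whereas $E_2(\vec x)=E_2(\vec y)+E_2^{\mathrm{lin}}(\vec d)>E_1(\vec y)$ and $E_3(\vec x)>E_1(\vec y)$ likewise, so $X_{\vec x}=E_1(\vec x)=X_{\vec y}$. The remaining subcase, where every word has $a_z\le b_z$ but some has $a_z<b_z$, is handled symmetrically: rotate such a word into position $w$, use $E_3$ in place of $E_1$, and let $y_w$ be huge. In each case the three required equalities hold, whence $u^{x_u}v^{x_v}w^{x_w}\equiv u^{y_u}v^{y_v}w^{y_w}$.

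I expect the main obstacle to be precisely the mixed-part equation $X_{\vec x}=X_{\vec y}$: because $X_{\vec x}$ is a minimum of three competing affine forms, it is not controlled by the two projection equations alone. The role of the rotation is to move a suitable word into a position at which one of the three forms becomes constant along the solution lattice of the projection system, and then a deliberately skewed choice of base point $\vec y$ pins the minimum to that form; the remaining linear algebra and the bookkeeping with $\posproj$, $\negproj$, and $\mixed$ are routine.
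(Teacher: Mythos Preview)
Your proof is correct and follows essentially the same route as the paper: reduce via Lemma~\ref{L:product_of_powers} to matching the projections and the quantity $X_{\vec x}$, solve the two projection equations as a homogeneous linear system in the difference $\vec d=\vec x-\vec y$, and then use the rotation together with a large coordinate to pin the minimum $X_{\vec x}$ to a single row that is constant along $\vec d$. Your framing in terms of the difference vector $\vec d$ and the linear parts $E_j^{\mathrm{lin}}$ makes the mechanism slightly more transparent than the paper's version (which adds a common $k$ to one coordinate of a precomputed solution pair), but the substance---the same three-way case split on whether some word has $a_z>b_z$, some has $a_z<b_z$, or all have $a_z=b_z$, and the same use of the dominant coordinate to force the minimum---is identical.
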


\begin{proof}
  We choose the rotation $(u,v,w)$ such that one of the following three
  conditions hold:
  \begin{enumerate}
  \item $|\posproj(u)|=|\negproj(u)|$,
    $|\posproj(v)|=|\negproj(v)|$, and $|\posproj(w)|=|\negproj(w)|$
    or
  \item $|\posproj(u)|>|\negproj(u)|$ or
  \item $|\posproj(w)|<|\negproj(w)|$.
  \end{enumerate}
  Given this rotation, we define the natural numbers
  $a_u,a_v,a_w,b_u,b_v,b_w,c_u,c_v,c_w$ as in Lemma~\ref{L:product_of_powers}.

  Consider the following system of linear equations:
  \begin{equation}
    \label{eq:linear-system}
    \left.
    \begin{array}{rcl}
      a_ux_u + a_vx_v + a_wx_w & = & a_uy_u + a_vy_v + a_wy_w\\
      b_ux_u + b_vx_v + b_wx_w & = & b_uy_u + b_vy_v + b_wy_w
    \end{array}
    \right\}
  \end{equation}
  Using Gaussian elimination, we find a nontrivial rational
  solution. Hence the system \eqref{eq:linear-system} has an integer
  solution. Increasing all entries in this solution by the minimal
  entry plus 2 yields a nontrivial solution
  $\vec{x'}=(x'_u,x'_v,x'_w)$ and $\vec{y'}=(y'_u,y'_v,y'_w)$ with
  $\vec{x'},\vec{y'}\in\bN^3$ and $x'_u,x'_v,x'_w,y'_u,y'_v,y'_w\ge2$.

  From this solution of the system \eqref{eq:linear-system} of linear
  equations, we now construct a nontrivial solution $\vec{x}$,
  $\vec{y}$ that, in addition, satisfies
  $X_{\vec{x}}=X_{\vec{y}}$. This is done by considering the three
  possible cases for the rotation $(u,v,w)$ separately.

  First, let $|\posproj(u)|=|\negproj(u)|$,
  $|\posproj(v)|=|\negproj(v)|$, and $|\posproj(w)|=|\negproj(w)|$,
  i.e., $a_u=b_u$, $a_v=b_v$, and $a_w=b_w$. We obtain for the values
  $X_{\vec{x'}}$ and $X_{\vec{y'}}$ from Lemma~\ref{L:product_of_powers}:
  \begin{align*}
    X_{\vec{x'}}
     &=\min
  \left( \begin{array}{rrrr}
    a_u x'_u+ & a_v x'_v+ & a_w  x'_w + & c_u-a_u\\
    a_u  x'_u+ & a_v x'_v+ & a_w  x'_w + & c_v-a_v\\
    a_u  x'_u+ & a_v  x'_v + & a_w x'_w + & c_w-a_w
  \end{array}
   \right)\\
     &=\min
  \left( \begin{array}{rrrr}
    a_u y'_u+ & a_v y'_v+ & a_w  y'_w + & c_u-a_u\\
    a_u  y'_u+ & a_v y'_v+ & a_w  y'_w + & c_v-a_v\\
    a_u  y'_u+ & a_v  y'_v + & a_w y'_w + & c_w-a_w
  \end{array}
   \right)\\
    &= X_{\vec{y'}}\,.
  \end{align*}
  This solves the first case.

  Now, suppose $|\posproj(u)|>|\negproj(u)|$ and therefore $a_u>b_u$.
  Then we find $k\ge0$ such that the following hold:
  \begin{align*}
    b_u(x'_u+k) + b_v x'_v + b_w x'_w +c_u-\min(a_u,b_u)
     &\le a_u(x'_u+k) + \min(a_v,b_v) x'_v + b_w x'_w +c_v-\min(a_v,b_v)\\
    b_u(x'_u+k) + b_v x'_v + b_w x'_w +c_u-\min(a_u,b_u)
     &\le a_u(x'_u+k) + a_v x'_v + \min(a_w,b_w) x'_w +c_w-\min(a_w,b_w)\\
    b_u(y'_u+k) + b_v y'_v + b_w y'_w +c_u-\min(a_u,b_u)
     &\le a_u(y'_u+k) + \min(a_v,b_v) y'_v + b_w y'_w +c_v-\min(a_v,b_v)\\
    b_u(y'_u+k) + b_v y'_v + b_w y'_w +c_u-\min(a_u,b_u)
     &\le a_u(y'_u+k) + a_v y'_v + \min(a_w,b_w) y'_w +c_w-\min(a_w,b_w)
  \end{align*}
  The reason is that in all cases, when increasing $k$, the right-hand
  side grows faster than the left-hand side. Set
  \[
    \vec{x}=(x'_u+k,x'_v,x'_w)\text{ and }
    \vec{y}=(y'_u+k,y'_v,y'_w)\,.
  \]
  Then this pair of vectors forms a non-trivial solution of the
  system~\eqref{eq:linear-system}. Since $b_u=\min(a_u,b_u)$, as a
  consequence we get in addition
  \begin{align*}
    X_{\vec{x}} &=
    b_u x_u + b_v x_v + b_w x_w +c_u-\min(a_u,b_u)\\
    &=b_u y_u + b_v y_v + b_w y_w +c_u-\min(a_u,b_u)\\
    &= X_{\vec{y}}\,.
  \end{align*}
  This solves the second case.

  Finally, suppose $|\posproj(w)|<|\negproj(w)|$ and therefore
  $a_w<b_w$.  The argument now is dual to the previous case: We
  find $k\ge0$ such that the following hold:
  \begin{align*}
    a_u x'_u + a_v x'_v + a_w (x'_w + k) +c_u-\min(a_w,b_w)
     &\le \min(a_u,b_u) x'_u + b_v x'_v + b_w (x'_w+k) +c_u-\min(a_u,b_u)\\
    a_u x'_u + a_v x'_v + a_w (x'_w + k) +c_u-\min(a_w,b_w)
     &\le a_u x'_u + \min(a_v,b_v) x'_v + b_w (x'_w+k) +c_v-\min(a_v,b_v)\\
    a_u y'_u + a_v y'_v + a_w (y'_w + k) +c_u-\min(a_w,b_w)
     &\le \min(a_u,b_u) y'_u + b_v y'_v + b_w (y'_w+k) +c_u-\min(a_u,b_u)\\
    a_u y'_u + a_v y'_v + a_w (y'_w + k) +c_u-\min(a_w,b_w)
     &\le a_u y'_u + \min(a_v,b_v) y'_v + b_w (y'_w+k) +c_v-\min(a_v,b_v)
  \end{align*}
  The reason is that in all cases, when increasing $k$, the right-hand
  side grows faster than the left-hand side. This time, set
  \[
    \vec{x}=(x'_u,x'_v,x'_w+k)\text{ and }
    \vec{y}=(y'_u,y'_v,y'_w+k)\,.
  \]
  Then this pair of vectors forms a non-trivial solution of the
  system~\eqref{eq:linear-system}. Since $a_w=\min(a_w,b_w)$, as a
  consequence we get in addition
  \begin{align*}
    X_{\vec{x}} &=
    a_u x_u + a_v x_v + a_w x_w +c_w-\min(a_w,b_w)\\
    &=    a_u y_u + a_v y_v + a_w y_w +c_w-\min(a_w,b_w)\\
    &= X_{\vec{y}}\,.
  \end{align*}
  This solves the third and last case.

  So far, we constructed a nontrivial solution $\vec{x}$, $\vec{y}$
  with natural coefficients of the system~\eqref{eq:linear-system}
  that, in addition, satisfies $X_{\vec{x}}=X_{\vec{y}}$. Furthermore,
  all entries in these two vectors are at least $2$. We finally show
  that this is a solution to the Equation~\eqref{eq: nontrivial equation}:

  First, we have
  \begin{align*}
    \posproj(u^{x_u} v^{x_v} w^{x_w})
      &= (p^{a_u})^{x_u} (p^{a_v})^{x_v} (p^{a_w})^{x_w}\\
      &= p^{a_u x_u + a_v x_v + a_w x_w}\\
      &= p^{a_u y_u + a_v y_v + a_w y_w}\\
      &= \posproj(u^{y_u} v^{y_v} w^{y_w})
   \intertext{and similarly}
    \negproj(u^{x_u} v^{x_v} w^{x_w})
      &= \negproj(u^{y_u} v^{y_v} w^{y_w}) \,.
  \end{align*}
  By Lemma~\ref{L:product_of_powers}, we get
  \begin{align*}
    \mixed(u^{x_u} v^{x_v} w^{x_w})
      &= gq^{X_{\vec{x}}}\\
      &= gq^{X_{\vec{y}}}\\
      &= \mixed(u^{y_u} v^{y_v} w^{y_w})\,.
  \end{align*}
  Hence the two words $u^{x_u} v^{x_v} w^{x_w}$ and $u^{y_u} v^{y_v}
  w^{y_w}$ agree in their projections and their normal forms agree in
  their mixed part. Consequently, the normal forms of these two words
  coincide. Hence they are equivalent, i.e., as required, we found a
  non-trivial solution $\vec{x}$, $\vec{y}$ of equation
  Equation~\eqref{eq: nontrivial equation}.\qed
\end{proof}

\begin{proposition}\label{P: conjugated roots}
  Let $(\Gamma,I)$ be an independence alphabet and let
  $\eta\colon\mathbb{M}(\Gamma,I)\hookrightarrow Q$ be an
  embedding. Let furthermore $b\in\Gamma$ and $p,q\in A^+$ be
  primitive with $p\sim q$ such that
  \[
     \posproj(\eta(b))\in p^+\text{ and }\negproj(\eta(b))\in q^+\,.
  \]
  Then there is at most one letter $a\in\Gamma$ with $(a,b)\in I$.
\end{proposition}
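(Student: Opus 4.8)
The plan is to mimic the proof of Proposition~\ref{P: nonconjugated roots}, replacing the ad hoc combinatorics on words used there by Lemma~\ref{L:nontrivial_equation}. Assume, towards a contradiction, that there are two \emph{distinct} letters $a,c\in\Gamma$ with $(a,b),(b,c)\in I$, and put
\[
  u=\nf(\eta([ab]_I))\,,\qquad v=\nf(\eta(b))\,,\qquad w=\nf(\eta([bc]_I))\,.
\]
These three words are nonempty: $\eta$ is injective and each of $[ab]_I,[b]_I,[bc]_I$ differs from $[\varepsilon]_I$ (the congruence $\equiv_I$ preserves length).

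First I would check that all positive projections of $u,v,w$ lie in $p^+$ and all negative ones in $q^+$. For $v$ this is the hypothesis. From $(a,b)\in I$ we get $\eta(a)\eta(b)=\eta(b)\eta(a)$ in $Q$, hence $\posproj(\eta(a))$ and $\posproj(\eta(b))$ commute in the free monoid; since $\posproj(\eta(b))\in p^+$ and $p$ is primitive, this forces $\posproj(\eta(a))\in p^*$ and therefore $\posproj(u)=\posproj(\eta(a))\,\posproj(\eta(b))\in p^+$. The same reasoning with $\negproj$ in place of $\posproj$, and with $c$ in place of $a$, gives $\negproj(u)\in q^+$, $\posproj(w)\in p^+$, $\negproj(w)\in q^+$. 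Since $p\sim q$ there are $g\in A^*$ and $h\in A^+$ with $p=gh$ and $q=hg$ (if $p=q$, take $g=\varepsilon$ and $h=p$), so $u,v,w$ satisfy all hypotheses of Lemma~\ref{L:nontrivial_equation}.

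Lemma~\ref{L:nontrivial_equation} now provides a rotation $(u_1,v_1,w_1)$ of $(u,v,w)$ and vectors $\vec{x}=(x_1,x_2,x_3),\vec{y}=(y_1,y_2,y_3)\in\bN^3$ with $\vec{x}\neq\vec{y}$ and $u_1^{x_1}v_1^{x_2}w_1^{x_3}\equiv u_1^{y_1}v_1^{y_2}w_1^{y_3}$. Since $[u_1],[v_1],[w_1]$ are the images under $\eta$ of $[ab]_I$, $[b]_I$, $[bc]_I$ taken in some cyclic order, and $\eta$ is an embedding, applying $\eta^{-1}$ turns this into an identity $[s(\vec{x})]_I=[s(\vec{y})]_I$ in $\bM(\Gamma,I)$, where $s(\vec{z})$ is the concatenation — in one fixed cyclic order, the same for $\vec{x}$ and $\vec{y}$ — of $z_i$ copies of the appropriate one of the words $ab$, $b$, $bc$. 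As $a,b,c$ are pairwise distinct and $\equiv_I$ preserves, for every letter, the number of its occurrences, counting the occurrences of $a$ (which appears only in the $ab$-block) and of $c$ (only in the $bc$-block) gives equality of the corresponding two coordinates of $\vec{x}$ and $\vec{y}$, and counting $b$ gives $x_1+x_2+x_3=y_1+y_2+y_3$; hence $\vec{x}=\vec{y}$, contradicting the choice of these vectors. Therefore no such pair $a,c$ exists.

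The bulk of the work is already packaged into Lemma~\ref{L:nontrivial_equation} (and Lemma~\ref{L:product_of_powers} behind it); its subtle point — absent in the non-conjugated case — is that for conjugated roots the mixed part of $\nf(u^{x_u}v^{x_v}w^{x_w})$ can be arbitrarily long, so one cannot bound it away as in Proposition~\ref{P: nonconjugated roots} but must instead compute the exponent $X_{\vec{x}}$ exactly and then choose the rotation and the solution of the linear system so that $X_{\vec{x}}=X_{\vec{y}}$. Given those lemmas, everything above is pure bookkeeping.
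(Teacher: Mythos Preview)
Your proof is correct and follows essentially the same route as the paper's: set up $u,v,w$ as the normal forms of $\eta([ab]_I),\eta(b),\eta([bc]_I)$, verify their projections lie in $p^+$ and $q^+$, invoke Lemma~\ref{L:nontrivial_equation}, and derive $\vec{x}=\vec{y}$ by pulling back through the injective $\eta$ and counting occurrences of $a$, $b$, $c$. The only difference is cosmetic: the paper treats the three possible rotations of $(u,v,w)$ in three separate paragraphs, whereas you dispatch them uniformly with a single Parikh-vector argument.
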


\begin{proof}
  Towards a contradiction, suppose there are distinct letters $a$ and
  $c$ in $\Gamma$ with $(a,b),(b,c)\in I$. Let
  \[
     u'=\nf(\eta([ab]_I))\,,\ v'=\nf(\eta(b))\,,
    \text{ and }w'=\nf(\eta([bc]_I))\,.
  \]
  Since $(a,b)\in I$, we have $ab\equiv_I ba$ and therefore
  $\eta([ab]_I)=\eta([ba]_I)$. This implies
  $\posproj(\eta(a))\,\posproj(\eta(b))=\posproj(\eta(b))\,\posproj(\eta(a))$, i.e., the
  two words $\posproj(\eta(a))$ and $\posproj(\eta(b))$ commute in the free
  monoid. Since $\posproj(\beta(b))\in p^+$ and $p$ is primitive, this
  implies $\posproj(\eta(a))\in p^*$ and therefore
  $\posproj(u')=\posproj(\eta(a))\,\posproj(\eta(b))\in p^+$. Similarly,
  $\negproj(u')\in q^+$ as well as $\posproj(w')\in p^+$ and
  $\negproj(w')\in q^+$.

  Hence, by Lemma~\ref{L:nontrivial_equation}, there exists a
  rotation $(u,v,w)$ of $(u',v',w')$ and distinct vectors $\vec{x}$,
  $\vec{y}\in\bN^3$ satisfying Equation~\eqref{eq: nontrivial
    equation}. We consider the three possible rotations separately.

  First suppose the rotation is trivial, i.e., $(u,v,w)=(u',v',w')$.
  Then we obtain
  \begin{align*}
     \eta([(ab)^{x_u} b^{x_v} (bc)^{x_w}]_I)
     &=[u^{x_u} v^{x_v} w^{x_w}]\\
     &=[u^{y_u} v^{y_v} w^{y_w}]\\
     &=\eta([(ab)^{y_u} b^{y_v} (bc)^{y_w}]_I)\,.
  \end{align*}
  Since $\eta$ is injective, and since $(a,b),(b,c)\in I$, this
  implies
  \begin{align*}
    a^{x_u} b^{x_u+x_v+x_w} c^{x_w}
     &\equiv_I (ab)^{x_u} b^{x_v} (bc)^{x_w}\\
     &\equiv_I (ab)^{y_u} b^{y_v} (bc)^{y_w}\\
     &\equiv_I a^{y_u} b^{y_u+y_v+y_w} c^{y_w}\,.
  \end{align*}
  Since the letters $a$, $b$, and $c$ are mutually distinct, we obtain
  \[
    (x_u,x_u+x_v+x_w,x_w)= (y_u,y_u+y_v+y_w,y_w)
  \]
  and therefore $\vec{x}=\vec{y}$. But this contradicts our choice of
  these two vectors as distinct.

  Secondly, suppose $(u,v,w)=(v',w',u')$.
  Then we obtain
  \begin{align*}
     \eta([b^{x_u}(bc)^{x_v} (ab)^{x_w}]_I)
     &=[u^{x_u} v^{x_v} w^{x_w}]\\
     &=[u^{y_u} v^{y_v} w^{y_w}]\\
     &=\eta([b^{y_u}(bc)^{y_v} (ab)^{y_w}]_I)\,.
  \end{align*}
  As in the previous case, injectivity of $\eta$ and commutation of $b$
  with $a$ and with $c$ yields
  \[
    c^{x_v} b^{x_u+x_v+x_w} a^{x_w} \equiv_I
     c^{y_v} b^{y_u+y_v+y_w} a^{y_w}\,.
  \]
  From the distinctness of $a$, $b$ and $c$, we again get
  $\vec{x}=\vec{y}$ which contradicts our choice of these two vectors
  as distinct.

  Finally, suppose $(u,v,w)=(w',u',v')$.  Then we obtain
  \begin{align*}
     \eta([(bc)^{x_u} (ab)^{x_v} b^{x_w}]_I)
     &=[u^{x_u} v^{x_v} w^{x_w}]\\
     &=[u^{y_u} v^{y_v} w^{y_w}]\\
     &=\eta([(bc)^{y_u} (ab)^{y_v} b^{y_w}]_I)\,.
  \end{align*}
  As in the previous cases, this yields a contradiction to our choice
  of the two vectors $\vec{x}$ and $\vec{y}$ as distinct.

  Thus, indeed, there are no two distinct letters $a$ and $c$ with
  $(a,b),(b,c)\in I$.\qed
\end{proof}

The following corollary is the main result of this section. Its proof
is an immediate consequence of Propositions~\ref{P: nonconjugated
  roots} and \ref{P: conjugated roots} (depending on whether the roots
of the two projections of $\eta(a)$ are conjugated or not).

\begin{corollary}\label{C: small degree}
  Let $(\Gamma,I)$ be an independence alphabet, let $\eta\colon\mathbb
  M(\Gamma,I)\hookrightarrow Q$ be an embedding, and let $a\in\Gamma$.
  If $\posproj(\eta(a))\neq\varepsilon$ and
  $\negproj(\eta(b))\neq\varepsilon$, then the degree of $a$ is
  $\le1$.
\end{corollary}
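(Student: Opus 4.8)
The plan is to reduce immediately to the two preceding propositions. First I would observe that the hypothesis $\posproj(\eta(a))\neq\varepsilon$ and $\negproj(\eta(a))\neq\varepsilon$ says precisely that $a\in\Gamma_\pm$. Since $\posproj(\eta(a))$ is a nonempty word over $A$, it has a unique primitive root; call it $p\in A^+$, so that $\posproj(\eta(a))\in p^+$. Likewise, $\negproj(\eta(a))$ has a unique primitive root $q\in A^+$ with $\negproj(\eta(a))\in q^+$. Thus $a$ plays exactly the role of the letter $b$ in Propositions~\ref{P: nonconjugated roots} and \ref{P: conjugated roots}.

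Next I would split into the two exhaustive cases according to whether the primitive roots $p$ and $q$ are conjugated. If $p\not\sim q$, then Proposition~\ref{P: nonconjugated roots} (applied with $b:=a$) yields that there is at most one letter $a'\in\Gamma$ with $(a',a)\in I$. If $p\sim q$, then Proposition~\ref{P: conjugated roots} (again with $b:=a$) yields the same conclusion. Since these two cases cover all possibilities, in either case the set $\{a'\in\Gamma\mid (a,a')\in I\}$ has at most one element, i.e., the degree of $a$ in $(\Gamma,I)$ is $\le 1$.

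The main obstacle here is essentially nonexistent: all the real work --- the construction of the nontrivial solutions to Equation~\eqref{eq: nontrivial equation - nonconjugated case} / Equation~\eqref{eq: nontrivial equation} and the combinatorics on words controlling the mixed part of the normal form --- has already been carried out in Propositions~\ref{P: nonconjugated roots} and \ref{P: conjugated roots}. The only point to be careful about is that the dichotomy $p\sim q$ versus $p\not\sim q$ is genuinely exhaustive and that the hypotheses of the two propositions are met; both are immediate once one notes that every letter of $\Gamma_\pm$ has well-defined primitive roots for its two projections.
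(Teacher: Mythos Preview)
Your proposal is correct and follows exactly the paper's own approach: the paper states that the corollary is an immediate consequence of Propositions~\ref{P: nonconjugated roots} and~\ref{P: conjugated roots}, depending on whether the primitive roots of the two projections of $\eta(a)$ are conjugated or not. Your identification of the primitive roots $p,q$, the case split on $p\sim q$ versus $p\not\sim q$, and the application of the two propositions with $b:=a$ match the intended argument precisely.
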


\subsection{$(\Gamma,I)$ is $P_4$-free}

\begin{lemma}\label{L: P4-free}
  Let $t,u,v,w\in \Sigma^+$ such that $\negproj(u)=\varepsilon$,
  $\posproj(v)=\varepsilon$, $vw\equiv wv$, and $tu\equiv ut$. Then
  there exists a tuple $\vec{x}=(x_t,x_{u_1},x_{u_2},x_v,x_w)$ of
  natural numbers with $x_t, x_w\neq0$ and
  \begin{equation}
    \label{eq: second equation}
    u^{x_{u_1}} v^{x_v} w t^{x_t}  w^{x_w} u^{x_{u_2}} \equiv 
    u^{x_{u_1}} w u^{x_{u_2}}  w^{x_w} t^{x_t} v^{x_v}\,.
  \end{equation}
\end{lemma}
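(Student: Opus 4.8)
The plan is to push every ``read'' action of both sides of \eqref{eq: second equation} to the very front, i.e.\ immediately after the long prefix $u^{x_{u_1}}$, so that both sides become words of the simple shape $u^{x_{u_1}}\,\overline{N}\,P$ with $N,P\in A^*$, and then to pick the exponents so that the resulting $N$'s agree and the resulting $P$'s agree. The two commutation hypotheses are exactly what is needed to control the positive projection of the ``$t$-block'' and the negative projection of the ``$v$-block''.

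First I would extract the consequences of the hypotheses. Since $\negproj(u)=\varepsilon$ and $u\in\Sigma^+$, the word $u$ lies in $A^+$; let $p$ be its primitive root, so $u=p^{\alpha}$ with $\alpha\ge1$. From $tu\equiv ut$ we get $\posproj(t)\posproj(u)=\posproj(u)\posproj(t)$, and since $\posproj(u)=u=p^{\alpha}$ with $p$ primitive this forces $\posproj(t)=p^{a_t}$ for some $a_t\ge0$. Dually, since $\posproj(v)=\varepsilon$ and $v\in\Sigma^+$, the word $v$ lies in $\overline A^+$, so $\negproj(v)\neq\varepsilon$; let $q$ be its primitive root, so $\negproj(v)=q^{b_v}$ with $b_v\ge1$, and $vw\equiv wv$ then forces $\negproj(w)=q^{b_w}$ for some $b_w\ge0$.

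Now set
\[
  s_L=v^{x_v}\,w\,t^{x_t}\,w^{x_w}\,u^{x_{u_2}}
  \qquad\text{and}\qquad
  s_R=w\,u^{x_{u_2}}\,w^{x_w}\,t^{x_t}\,v^{x_v}\,,
\]
so that the two sides of \eqref{eq: second equation} are $u^{x_{u_1}}s_L$ and $u^{x_{u_1}}s_R$; note that $\negproj(s_L)$ and $\negproj(s_R)$ consist of the same blocks in different order, and hence have equal length. Provided $x_{u_1}$ is chosen so large that $|u^{x_{u_1}}|\ge|\negproj(s_L)|=|\negproj(s_R)|$, one obtains, by iterating Lemma~\ref{L:generalized_equations} exactly as in the proof of Lemma~\ref{L:nontrivial_equation_P2_cup_P3} (this also follows directly from Theorem~\ref{T:multiplication} by a short computation of mixed parts),
\[
  u^{x_{u_1}}s_L\equiv u^{x_{u_1}}\,\overline{\negproj(s_L)}\,\posproj(s_L)
  \qquad\text{and}\qquad
  u^{x_{u_1}}s_R\equiv u^{x_{u_1}}\,\overline{\negproj(s_R)}\,\posproj(s_R)\,.
\]
Hence it suffices to choose the exponents so that $\negproj(s_L)=\negproj(s_R)$ and $\posproj(s_L)=\posproj(s_R)$: then the right-hand sides above are literally the same word.

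Using $\negproj(u)=\varepsilon$ one computes $\negproj(s_L)=q^{b_vx_v+b_w}\,\negproj(t)^{x_t}\,q^{b_wx_w}$ and $\negproj(s_R)=q^{b_w+b_wx_w}\,\negproj(t)^{x_t}\,q^{b_vx_v}$; these coincide as soon as $b_vx_v=b_wx_w$. Likewise, using $\posproj(v)=\varepsilon$ one gets $\posproj(s_L)=\posproj(w)\,p^{a_tx_t}\,\posproj(w)^{x_w}\,p^{\alpha x_{u_2}}$ and $\posproj(s_R)=\posproj(w)\,p^{\alpha x_{u_2}}\,\posproj(w)^{x_w}\,p^{a_tx_t}$, which coincide as soon as $a_tx_t=\alpha x_{u_2}$. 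So I would finish by choosing natural numbers meeting these two balance conditions together with $x_t,x_w\neq0$: take $(x_v,x_w)=(b_w,b_v)$ if $b_w\ge1$ and $(x_v,x_w)=(0,1)$ if $b_w=0$; take $(x_t,x_{u_2})=(\alpha,a_t)$ if $a_t\ge1$ and $(x_t,x_{u_2})=(1,0)$ if $a_t=0$; and then take $x_{u_1}$ large enough for the normalisation step. Since $b_v\ge1$ and $\alpha\ge1$, in every case $x_w\ge1$ and $x_t\ge1$, as required. The one point that really needs attention is that the commutation hypotheses provide common primitive roots only for the pairs $\{\posproj(t),u\}$ and $\{\negproj(v),\negproj(w)\}$, whereas $\posproj(w)$ and $\negproj(t)$ are arbitrary words; this is why one cannot merely equate total exponents but must impose the two balance equations $b_vx_v=b_wx_w$ and $a_tx_t=\alpha x_{u_2}$ — which, fortunately, are always solvable with $x_t,x_w\neq0$.
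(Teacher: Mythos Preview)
Your proof is correct and follows essentially the same approach as the paper: both prepend a long positive prefix $u^{x_{u_1}}$, use Lemma~\ref{L:generalized_equations} to push all read actions to the front, and then match the positive and negative projections of the two tails. The paper simply writes down the concrete values $(x_t,x_{u_2},x_v,x_w)=(a_u,a_t,b_w,b_v)$ and verifies the projection equalities directly, whereas you first isolate the two balance conditions $b_vx_v=b_wx_w$ and $a_tx_t=\alpha x_{u_2}$ and then solve them (with a harmless case split for $b_w=0$ or $a_t=0$); in the non-degenerate case your solution coincides with the paper's.
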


\begin{proof}
  Since $\negproj(u)=\varepsilon$ and $\posproj(v)=\varepsilon$, there
  are primitive words $p$ and $q$ and natural numbers $a_u,b_v>0$ with
  \[
     u=\posproj(u)=p^{a_u}\text{ and }v=\negproj(v)=q^{b_v}\,.
  \]
  Since $tu\equiv ut$ and $vw\equiv wv$, there are $a_t,b_w\in\bN$ with 
  \[
    \posproj(t)=p^{a_t}\text{ and }\negproj(w)=q^{b_w}\,.
  \]
  Then we have
  \begin{align*}
    \posproj(v^{b_w} w t^{a_u} w^{b_v} u^{a_t})
      &= \varepsilon^{b_w} \posproj(w) p^{a_ta_u} \posproj(w^{b_v}) p^{a_ua_t}\\
      &= \posproj(w) p^{a_ua_t} \posproj(w^{b_v}) p^{a_ta_u} \varepsilon\\
      &= \posproj(w u^{a_t} w^{b_v} t^{a_u} v^{b_w})
   \intertext{and}
    \negproj(v^{b_w} w t^{a_u} w^{b_v} u^{a_t})
      &= q^{b_vb_w} q^{b_w} \negproj(t^{a_u}) q^{b_wb_v} \varepsilon\\
      &= q^{b_w} \varepsilon q^{b_wb_v} \negproj(t^{a_u}) q^{b_vb_w}\\
      &= \negproj(w u^{a_t} w^{b_v} t^{a_u} v^{b_w})\,.
  \end{align*}
  Let $y\in\bN$ such that
  $|\negproj(v^{b_w} w t^{a_u} w^{b_v} u^{a_t})|=|\negproj(w u^{a_t}
  w^{b_v} t^{a_u} v^{b_w})|\le |u^y|$. We obtain
  \begin{align*}
    u^y v^{b_w} w t^{a_u} w^{b_v} u^{a_t}
      &\equiv u^y\, \overline{\negproj(v^{b_w} w t^{a_u} w^{b_v} u^{a_t})}
                 \, \posproj(v^{b_w} w t^{a_u} w^{b_v} u^{a_t})
        &&\text{by Lemma~\ref{L:generalized_equations}}\\
      &= u^y\,\overline{\negproj(w u^{a_t} w^{b_v} t^{a_u} v^{b_w})}
            \,\posproj(w u^{a_t} w^{b_v} t^{a_u} v^{b_w})\\
      &\equiv u^y w u^{a_t} w^{b_v} t^{a_u} v^{b_w}
        &&\text{by Lemma~\ref{L:generalized_equations}}\,.
  \end{align*}
  Hence the tuple $(x_t,x_{u_1},x_{u_2},x_v,x_w)=(a_u,y,a_t,b_w,b_v)$
  has the desired properties.\qed
\end{proof}

\begin{proposition}\label{P: P4-free}
  Let $(\Gamma,I)$ be an independence alphabet and let
  $\eta\colon\mathbb{M}(\Gamma,I)\hookrightarrow Q$ be an embedding.
  Then $(\Gamma,I)$ is $P_4$-free.
\end{proposition}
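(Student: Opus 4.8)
The plan is to argue by contradiction: assume $(\Gamma,I)$ contains an induced $P_4$, i.e.\ four distinct vertices $a,b,c,d$ with $(a,b),(b,c),(c,d)\in I$ and $(a,c),(a,d),(b,d)\in D$, and then produce a false relation in $\mathbb M(\Gamma,I)$. First I would locate the letters of this $P_4$ among $\Gamma_+,\Gamma_-,\Gamma_\pm$: the two middle vertices $b$ and $c$ have degree at least $2$, so Corollary~\ref{C: small degree} gives $b,c\notin\Gamma_\pm$, hence $b,c\in\Gamma_+\cup\Gamma_-$. Since $(\Gamma_+,I)$ and $(\Gamma_-,I)$ are discrete by Proposition~\ref{P: Gamma+ cup Gamma-} and $(b,c)\in I$, exactly one of $b,c$ lies in $\Gamma_+$ and the other in $\Gamma_-$. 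As $d-c-b-a$ is again an induced $P_4$ on the same four vertices, I may relabel and assume $b\in\Gamma_+$ and $c\in\Gamma_-$ (alternatively one could appeal to the read/write duality of $Q$).

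Next I would feed this into Lemma~\ref{L: P4-free}. Choose word representatives $t,u,v,w\in\Sigma^+$ of $\eta([a]_I),\eta([b]_I),\eta([c]_I),\eta([d]_I)$; these are nonempty because $\eta$ is injective and no single letter equals the identity of $\mathbb M(\Gamma,I)$. Then $\negproj(u)=\varepsilon$ since $b\in\Gamma_+$, $\posproj(v)=\varepsilon$ since $c\in\Gamma_-$, and the edges $(c,d),(a,b)\in I$ translate into $vw\equiv wv$ and $tu\equiv ut$. So Lemma~\ref{L: P4-free} yields natural numbers $x_t,x_{u_1},x_{u_2},x_v,x_w$ with $x_t,x_w\neq 0$ and
\[
  u^{x_{u_1}} v^{x_v} w\, t^{x_t} w^{x_w} u^{x_{u_2}} \equiv
  u^{x_{u_1}} w\, u^{x_{u_2}} w^{x_w} t^{x_t} v^{x_v}\,.
\]
Since $\eta$ is an embedding, this pulls back to the trace identity
\[
  b^{x_{u_1}} c^{x_v} d\, a^{x_t} d^{x_w} b^{x_{u_2}} \equiv_I
  b^{x_{u_1}} d\, b^{x_{u_2}} d^{x_w} a^{x_t} c^{x_v}\,.
\]

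Finally I would obtain the contradiction by projecting onto the pair $\{a,d\}$. Because $(a,d)\in D$, erasing from a word every letter other than $a$ and $d$ respects $\equiv_I$ (the defining relations of $\equiv_I$ only swap two \emph{independent} letters, hence never the dependent pair $a,d$), so it defines a homomorphism $\mathbb M(\Gamma,I)\to\{a,d\}^*$. The point of choosing $\{a,d\}$ is that it deletes $b$ and $c$ entirely — precisely the letters whose exponents $x_{u_1},x_{u_2},x_v$ are not controlled — while the surviving exponents $x_t$ and $x_w$ are guaranteed nonzero. Under this map the left-hand side above becomes $d\,a^{x_t}d^{x_w}$ and the right-hand side becomes $d^{1+x_w}a^{x_t}$; as $x_t,x_w\ge 1$ and $a\neq d$, these are distinct words of $\{a,d\}^*$, contradicting the identity. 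Hence $(\Gamma,I)$ is $P_4$-free. The genuine difficulty has already been absorbed into Lemma~\ref{L: P4-free}, which manufactures the long queue equation out of the two local commutations; what remains is bookkeeping, the only points needing a little care being the reduction to $b\in\Gamma_+$, $c\in\Gamma_-$ and the well-definedness of the $\{a,d\}$-projection on the trace monoid.
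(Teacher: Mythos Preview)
Your proof is correct and follows essentially the same route as the paper: place $b,c$ in $\Gamma_+\cup\Gamma_-$ via Corollary~\ref{C: small degree}, separate them into $\Gamma_+$ and $\Gamma_-$ via Proposition~\ref{P: Gamma+ cup Gamma-}, apply Lemma~\ref{L: P4-free}, and pull back through~$\eta$. The only cosmetic difference is in the final step: the paper phrases it as ``from the trace identity and $x_t,x_w\neq0$, $a\neq d$ we obtain $(a,d)\in I$'', whereas you assume $(a,d)\in D$ at the outset and make the contradiction explicit via the $\{a,d\}$-projection---which is exactly the computation underlying the paper's terse assertion.
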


\begin{proof}
  Suppose there are mutually distinct nodes $a,b,c,d\in\Gamma$ with
  $(a,b),(b,c),(c,d)\in I$. Then $b$ and $c$ both have degree $\ge2$
  in $(\Gamma,I)$, i.e., they belong to $\Gamma_+\cup\Gamma_-$ by
  Corollary~\ref{C: small degree}. Since $(\Gamma_+,I)$ and
  $(\Gamma_-,I)$ are both discrete by Proposition~\ref{P: Gamma+ cup
    Gamma-}, we can assume w.l.o.g.\ that $b\in\Gamma_+$ and
  $c\in\Gamma_-$.

  There are words $t,u,v,w\in\Sigma^+$ with $\eta(a)=[t]$,
  $\eta(b)=[u]$, $\eta(c)=[v]$, and $\eta(d)=[w]$.

  Since $(a,b)\in I$, we get $[tu]=\eta(ab)=\eta(ba)=[ut]$ and therefore
  $tu\equiv ut$.
  Since $(c,d)\in I$, we get $[vw]=\eta(cd)=\eta(dc)=[wv]$ and therefore
  $vw\equiv wv$.

  Since $b\in\Gamma_+$, we get
  $\negproj(u)=\negproj(\eta(b))=\varepsilon$. Similarly, from
  $c\in\Gamma_-$, we obtain
  $\posproj(v)=\posproj(\eta(c))=\varepsilon$.

  From Lemma~\ref{L: P4-free}, we find natural numbers
  $x_t,x_{u_1},x_{u_2},x_v,x_w$ such that $x_t, x_w\neq0$ and
  \[
    u^{x_{u_1}} v^{x_v}w t^{x_t} w^{x_w} u^{x_{u_2}} \equiv
    u^{x_{u_1}} w u^{x_{u_2}} w^{x_w} t^{x_t} v^{x_v}\,.
  \]

  Consequently,
  \begin{align*}
    \eta(b^{x_{u_1}} c^{x_v}d a^{x_t} d^{x_w} b^{x_{u_2}})
      &=[u^{x_{u_1}} v^{x_v}w t^{x_t} w^{x_w} u^{x_{u_2}}  ]\\
      &=    [u^{x_{u_1}} w u^{x_{u_2}} w^{x_w} t^{x_t} v^{x_v}]\\
      &=\eta(b^{x_{u_1}} d b^{x_{u_2}} d^{x_w} a^{x_t} c^{x_v})\,.
  \end{align*}
  Since $\eta$ is injective, this implies
  \[
     b^{x_{u_1}} c^{x_v}d a^{x_t} d^{x_w} b^{x_{u_2}} 
     \equiv_I b^{x_{u_1}} d b^{x_{u_2}} d^{x_w} a^{x_t} c^{x_v}\,.
  \]
  Since $x_t,x_w\neq0$ and $a\neq d$, we obtain $(a,d)\in I$. Hence
  the mutually disjoint nodes $a,b,c,d$ do not induce $P_4$ in
  $(\Gamma,I)$.\qed
\end{proof}

\subsection{Proof of the implication (1)$\Rightarrow$(3) in
  Theorem~\ref{T main}}

\begin{theorem}\label{T main1}
  Let $(\Gamma,I)$ be an independence alphabet and $\eta\colon\mathbb
  M(\Gamma,I)\to Q$ be an embedding. Then one of the following
  conditions holds:
  \begin{enumerate}
  \item all nodes in $(\Gamma,I)$ have degree $\le 1$ or
  \item $(\Gamma,I)$ has only one non-trivial connected component and
    this component  is complete bipartite.
  \end{enumerate}
\end{theorem}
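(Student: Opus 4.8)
The plan is to assemble Theorem~\ref{T main1} from the four structural facts established in this section: Proposition~\ref{P: Gamma+ cup Gamma-} (the set $\Gamma_+\cup\Gamma_-$ induces a complete bipartite subgraph and $(\Gamma_+,I)$, $(\Gamma_-,I)$ are discrete), Corollary~\ref{C: small degree} (every vertex in $\Gamma_\pm$ has degree $\le1$), Proposition~\ref{P: P2 cup P3} (any letter of $\Gamma_+\cup\Gamma_-$ is adjacent to at least one endpoint of every edge), and Proposition~\ref{P: P4-free} ($(\Gamma,I)$ is $P_4$-free). First I would dispose of the easy case: if every node of $(\Gamma,I)$ has degree $\le1$, we are in case~(1) and there is nothing to prove. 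So assume some node $b$ has degree $\ge2$; by Corollary~\ref{C: small degree}, $b\in\Gamma_+\cup\Gamma_-$, so $\Gamma_+\cup\Gamma_-$ contains a vertex of degree $\ge2$.

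Next I would argue that the nontrivial connected component $C$ containing $b$ is unique. Suppose $C'$ were another nontrivial component, so it contains an edge $\{c,d\}\in I$. Since $b\notin C'$, we have $(b,c),(b,d)\in D$, contradicting Proposition~\ref{P: P2 cup P3} applied to $b\in\Gamma_+\cup\Gamma_-$ and the edge $(c,d)\in I$. Hence $C$ is the only nontrivial component, and every isolated vertex has degree $0\le1$, so it remains to show $C$ is complete bipartite.

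To show $C$ is complete bipartite I would proceed as follows. Every vertex of $C$ of degree $\ge2$ lies in $\Gamma_+\cup\Gamma_-$, and (since $C$ is connected and has an edge) every vertex of degree exactly $1$ is adjacent to such a vertex; I would argue more carefully that in fact \emph{all} of $C$ lies in $\Gamma_+\cup\Gamma_-$. Indeed if $e\in C\cap\Gamma_\pm$, then $e$ has degree $\le1$ by Corollary~\ref{C: small degree}, so its unique neighbour $f$ has degree $\ge2$ (otherwise $\{e,f\}$ would be its own nontrivial component, contradicting uniqueness unless $C=\{e,f\}$, which I would handle directly — a single edge is trivially complete bipartite). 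But then $f\in\Gamma_+\cup\Gamma_-$ and applying Proposition~\ref{P: P2 cup P3} with $f$ and any edge incident to a third vertex forces a contradiction with the $P_4$-freeness or the degree bound; in any case one reduces to $C\subseteq\Gamma_+\cup\Gamma_-$. Writing $C_1=C\cap\Gamma_+$ and $C_2=C\cap\Gamma_-$, Proposition~\ref{P: Gamma+ cup Gamma-} tells us $(C_1,I)$ and $(C_2,I)$ are discrete, so every edge of $C$ goes between $C_1$ and $C_2$: the component is bipartite with parts $C_1,C_2$. For completeness I would take arbitrary $a\in C_1$, $c\in C_2$ and show $(a,c)\in I$: since $C$ is connected there is a path from $a$ to $c$, and using $P_4$-freeness (Proposition~\ref{P: P4-free}) repeatedly one shortens any such path to length $1$ — a standard graph-theoretic fact that a connected bipartite $P_4$-free graph is complete bipartite.

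The main obstacle I expect is the bookkeeping in the last paragraph: pinning down that the whole nontrivial component lies in $\Gamma_+\cup\Gamma_-$ and then milking $P_4$-freeness to upgrade ``connected bipartite'' to ``complete bipartite.'' The latter is where the real content sits — one must check that a connected bipartite graph with no induced $P_4$ is complete bipartite (equivalently, any two vertices in opposite parts at distance $\ge3$ would, together with an intermediate vertex on a shortest path, yield an induced $P_4$, and distance $2$ within a bipartite graph means same part, so opposite parts force distance $1$). Everything else is immediate from the cited results, so I would keep this proof short, citing Propositions~\ref{P: Gamma+ cup Gamma-}, \ref{P: P2 cup P3}, \ref{P: P4-free} and Corollary~\ref{C: small degree} by name and only spelling out the elementary graph argument.
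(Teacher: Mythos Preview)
Your overall architecture matches the paper's: assume a vertex $b$ of degree $\ge2$, use Corollary~\ref{C: small degree} to place it in $\Gamma_+\cup\Gamma_-$, use Proposition~\ref{P: P2 cup P3} to get uniqueness of the nontrivial component $C$, show $(C,I)$ is bipartite, and then combine connectedness with $P_4$-freeness to upgrade to complete bipartite. Your final step (a connected bipartite $P_4$-free graph is complete bipartite, via shortest paths) is correct and amounts to the same thing as the paper's appeal to Seinsche's theorem that a connected $P_4$-free graph has disconnected complement.

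The genuine gap is your attempt to prove $C\subseteq\Gamma_+\cup\Gamma_-$. Your sketch here (``applying Proposition~\ref{P: P2 cup P3} with $f$ and any edge incident to a third vertex forces a contradiction\dots in any case one reduces to $C\subseteq\Gamma_+\cup\Gamma_-$'') is not an argument, and in fact the claim need not hold: if $(C,I)$ turns out to be a star $K_{1,n}$ with $n\ge2$, the leaves have degree $1$ and nothing in the cited results prevents them from lying in $\Gamma_\pm$. Proposition~\ref{P: P2 cup P3} only constrains vertices of $\Gamma_+\cup\Gamma_-$, so it gives you no leverage on a putative $e\in C\cap\Gamma_\pm$.

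The paper avoids this detour entirely. It does \emph{not} claim $C\subseteq\Gamma_+\cup\Gamma_-$; instead it observes (using Proposition~\ref{P: P2 cup P3}) that $\Gamma_+\cup\Gamma_-\subseteq C$, and that every vertex of $C\setminus(\Gamma_+\cup\Gamma_-)$ has degree $1$ (Corollary~\ref{C: small degree}) with its unique neighbour necessarily in $\Gamma_+\cup\Gamma_-$ (since $|C|\ge3$). Thus $(C,I)$ is the complete bipartite graph on $\Gamma_+\cup\Gamma_-$ (Proposition~\ref{P: Gamma+ cup Gamma-}) with some pendant vertices attached, which is visibly bipartite. From there your own shortest-path argument finishes the job. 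So drop the claim $C\subseteq\Gamma_+\cup\Gamma_-$ and establish bipartiteness of $(C,I)$ this way instead; the rest of your plan then goes through.
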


\begin{proof}
  Suppose $(\Gamma,I)$ contains a node $a$ of degree $\ge2$. Then, by
  Corollary~\ref{C: small degree}, $a\in\Gamma_+\cup\Gamma_-$. From
  Proposition~\ref{P: P2 cup P3}, we obtain that $a$ is connected to
  any edge, i.e., it belongs to the only nontrivial connected
  component $C$ of $(\Gamma,I)$. Note that $|C|\ge3$ since it contains
  $a$ and its $\ge2$ neighbors. Hence the induced subgraph $(C,I)$
  contains at least one edge. Therefore Proposition~\ref{P: P2 cup P3}
  implies $\Gamma_+\cup\Gamma_-\subseteq C$. Note that all nodes in
  $C\setminus(\Gamma_+\cup\Gamma_-)$ have degree 1 by
  Corollary~\ref{C: small degree}. Hence, by Proposition~\ref{P:
    Gamma+ cup Gamma-}, the connected graph $(C,I)$ is a complete
  bipartite graph together with some additional nodes of degree 1. It
  follows that $(C,I)$ is bipartite. By Proposition~\ref{P: P4-free},
  it is a connected and $P_4$-free graph. Hence its complementary
  graph $(C,D)$ is not connected \cite{Sei74}. But this implies that
  $(C,I)$ is complete bipartite.\qed
\end{proof}

\end{document}